\documentclass{article}
\usepackage{fullpage}
\usepackage{hyperref}
\usepackage{amsmath, amssymb}
\usepackage{cleveref}
\usepackage{mathtools}
\usepackage{amsthm} 
\usepackage{enumerate}  
\usepackage{url} %
\usepackage{makeidx} 
\usepackage{multirow} 
\usepackage{caption}
\usepackage{subcaption}

\usepackage{tikz}
\usetikzlibrary{decorations.markings}
\pgfdeclarelayer{background}
\pgfdeclarelayer{nodelayer}
\pgfdeclarelayer{edgelayer}
\pgfsetlayers{background,edgelayer,nodelayer,main}

\tikzstyle{new}=[circle,  minimum width=4pt,inner sep=0pt, fill=black,draw=black]
\tikzstyle{none}=[circle,fill=white,draw=black]
\tikzstyle{n}=[shape=rectangle,minimum width=1pt,inner sep=0pt, fill=none,draw=none]
\tikzstyle{emph}=[circle,  minimum width=4pt,inner sep=0pt, fill=magenta,draw=magenta]

\tikzset{directed/.style={decoration={
  markings,
  mark=at position .6 with {\arrow{>}}},postaction={decorate}}}
\tikzset{directed2/.style={decoration={
  markings,
  mark=at position .6875 with {\arrow{>}},
  mark=at position .7125 with {\arrow{>}}},postaction={decorate}}}
\tikzset{directed3/.style={decoration={
  markings,
  mark=at position .7 with {\arrow{>}}},postaction={decorate}}}

\usepackage{array}
\usepackage{float}
\usepackage{wrapfig}

\usepackage{graphicx}
\usepackage{rotating}
\usepackage{nicefrac}
\usepackage{longtable}
\usepackage{xcolor}

\usepackage[overload]{textcase}

\makeatletter
\def\cleardoublepage{
  \clearpage
  \if@twoside\ifodd\c@page\else
  \hbox{}
  \thispagestyle{empty}
  \newpage
  \if@twocolumn\hbox{}\newpage\fi
  \fi\fi
  }
\makeatother

\newtheoremstyle{plainsl}%
	{\topsep}
	{\topsep}
	{\slshape} % only non-default setting
	{}
	{\normalfont\bfseries}
	{.}
	{ }
	{}

% I prefer 1.2 Lemma to Lemma 1.2
\swapnumbers

{\theoremstyle{plainsl}
\newtheorem{theorem}{Theorem}[section]
\newtheorem{lemma}[theorem]{Lemma}

\newtheorem{corollary}[theorem]{Corollary}

\newtheorem{openprob}[theorem]{Open Problem}

}
{\theoremstyle{remark}
\newtheorem{remark}[theorem]{Remark}
}

\numberwithin{equation}{section}

\newcommand\comp[1]{{\mkern2mu\overline{\mkern-2mu#1}}}
%% a substitute for setminus

\newcommand\pmat[1]{\begin{pmatrix} #1 \end{pmatrix}}

%% for "f restricted to S": $f\res S$
% ch2

\DeclareMathOperator\ev{ev}

\newcommand\cx{{\mathbb C}}% complexes

%non-negative integers
\newcommand\re{{\mathbb R}}%reals
\newcommand\rats{{\mathbb Q}}
%

% some calligraphy
%end{

\newcommand\Zv{{\mathbf v}}

\newcommand\ones{{\mathbf 1}}

%quantum walk stuff
\newcommand\ket[1]{| #1 \rangle}
\newcommand\braket[2]{\langle #1| #2 \rangle}

%other notations that I need

% defining a pair of contrasting colours that maintain contrast for colorblindness
\definecolor{vcolour}{RGB}{230,97,0}
\definecolor{kcolour}{RGB}{93,58,155}
\definecolor{gcolour}{RGB}{40,155,50}

\usepackage{authblk}

%arxiv macro with hyperref 
\newcommand{\arxiv}[1]{\href{https://arxiv.org/abs/#1}{\texttt{arXiv:#1}}}

\title{Peak state transfer in continuous quantum walks}
\author[1]{Gabriel Coutinho}
\author[2,3]{Krystal Guo}
\author[2,3]{Vincent Schmeits}

\affil[1]{Federal University of Minas Gerais, Belo Horizonte, Brazil}
\affil[2]{Korteweg-de Vries Institute for Mathematics, University of Amsterdam, Amsterdam, The Netherlands}
\affil[3]{\href{https://qusoft.org/}{QuSoft} (Research center for quantum software \& technology), Amsterdam, The Netherlands}

\affil[ ]{\texttt{gabriel@dcc.ufmg.br, k.guo@uva.nl, v.f.schmeits@uva.nl}}

\date{May 17, 2025}
%Do not leave it to today when submitting to the arXiv, or else it will time-travel.

\begin{document}

\maketitle
\begin{abstract}

We introduce and study \emph{peak state transfer}, a  notion of high  state transfer in qubit networks modeled by continuous-time quantum walks. Unlike perfect or pretty good state transfer, peak state transfer does not require fidelity arbitrarily close to 1, but crucially allows for an explicit determination of the time at which transfer occurs. We provide a spectral characterization of peak state transfer, which allows us to find many examples of peak state transfer, and we also establish tight lower bounds on fidelity and success probability. As a central example, we construct a family of weighted path graphs that admit peak state transfer over arbitrarily long distances with transfer probability approaching $\pi/4 \approx 0.78$, and we show how these are directly related to a family of unweighted graph which can also achieve peak state transfer after minor adjustments. Our analysis includes an explicit spectral characterization of these graphs which leads to the conclusion that these constructions offer improved robustness compared to classical perfect state transfer examples, such as the weighted paths derived from hypercubes, thereby identifying them as practical candidates for cheap and efficient quantum wires.

\vspace{10pt}

  \noindent\textit{Keywords:  state transfer, quantum walks, spectral graph theory, Hamiltonian dynamics, weighted graphs, quantum information} 
 
  \noindent\textit{Mathematics Subject Classifications 2020: 81P45, 05C50, 05C90, 81Q99} 
\end{abstract}

% 81P45 – Quantum information, communication, networks

% 05C50 – Graphs and linear algebra (spectral graph theory)

% 05C90 – Applications of graph theory

% 81Q99 – Quantum theory (none of the above, but in this section)

\section{Introduction}\label{sec:intro}

The ability to transmit quantum information with high fidelity is a fundamental requirement for quantum computing.
One approach to achieving this is via a \textit{quantum wire}:  a network of interacting qubits that allows for a quantum state input at one of the qubits to be transferred to another site by means of the time-evolution of a Hamiltonian. This problem is not new; it has been studied in multiple works since the seminal paper \cite{BoseQuantumComPaths}, and several connections to the combinatorics of the network (a graph, from now on) have been found \cite{ChrDatEke2004, ChristandlPSTQuantumSpinNet2, GodsilStateTransfer12,KayReviewPST, CoutinhoGodsilGuoVanhove2, KendonTamon, CoutinhoGodsilPSTProducts, KayBeyond, KemLipTau2017, GodGuoKem2020}. This paper is motivated by the very natural goal of minimizing the size of the graph --- with respect to both the number of qubits (vertices of the graph) and the number of interacting terms (edges of the graph) --- while maintaining a large distance between the input and output qubits. In other words, we aim to build a cheap and long quantum wire.

Ideally, our quantum wire should transfer the input qubit state with probability 1 to the target qubit, and this indeed occurs in several known examples of graphs \cite{CoutinhoPhD,CoutinhoGodsilGuoVanhove2}. This perfect state transfer is, however, not known to be available at arbitrary distances unless the size of the graph grows exponentially with the distance of the state transfer (see \cite{ChrDatEke2004}), which is undesirable for practical applications. The most recent attempt to construct a family of graphs admitting perfect state transfer and whose size is bounded by a polynomial in the graph diameter was \cite{kay2018perfect}, where the only gain achieved was in the base of the exponent.

Since it is understood that achieving \textsl{perfect} state transfer is rare, we move to the relaxed notion of \textsl{pretty good} (or \textsl{almost perfect}) state transfer, which has been considered in \cite{GodsilKirklandSeveriniSmithPGST,VinetZhedanovAlmost}. This phenomenon requires the network to provide arbitrarily good fidelity of state transfer at increasingly large times. There are known examples of graphs whose distance between the input and output qubits grow while maintaining the size of the graph linearly bounded on this distance \cite{CouGuovBo2017}, which would give rise to a cheap quantum wire, except that all known characterizations of this phenomenon rely heavily on number-theoretic tools and  do not provide a way of computing the times at which high state transfer occurs; see for example \cite{coutinho2023irrational}. In other words, we can prove that pretty good state transfer occurs but we cannot predict when it occurs, which presents a significant impediment for a cheap quantum wire. We illustrate an example of pretty good state transfer in Figure \ref{fig:pgst-p9}.

\begin{figure}[htbp]
    \centering
      \begin{minipage}{0.35\textwidth}
    \centering
\begin{tikzpicture}[scale=0.6, every node/.style={circle, draw, minimum size=2mm, inner sep=1pt}]
  % Draw the nodes
  \foreach \i in {0,...,8} {
    \node (\i) at (\i, 0) {\i};
  }

  % Draw the edges
  \foreach \i in {0,...,7} {
    \pgfmathtruncatemacro{\next}{\i + 1}
    \draw (\i) -- (\next);
  }
\end{tikzpicture}
    \par\vspace{1ex}
    \textbf{(a)} the path graph on $9$ vertices
  \end{minipage}
  \quad
  \begin{minipage}{0.6\textwidth}
    \centering
    \includegraphics[width=\linewidth]{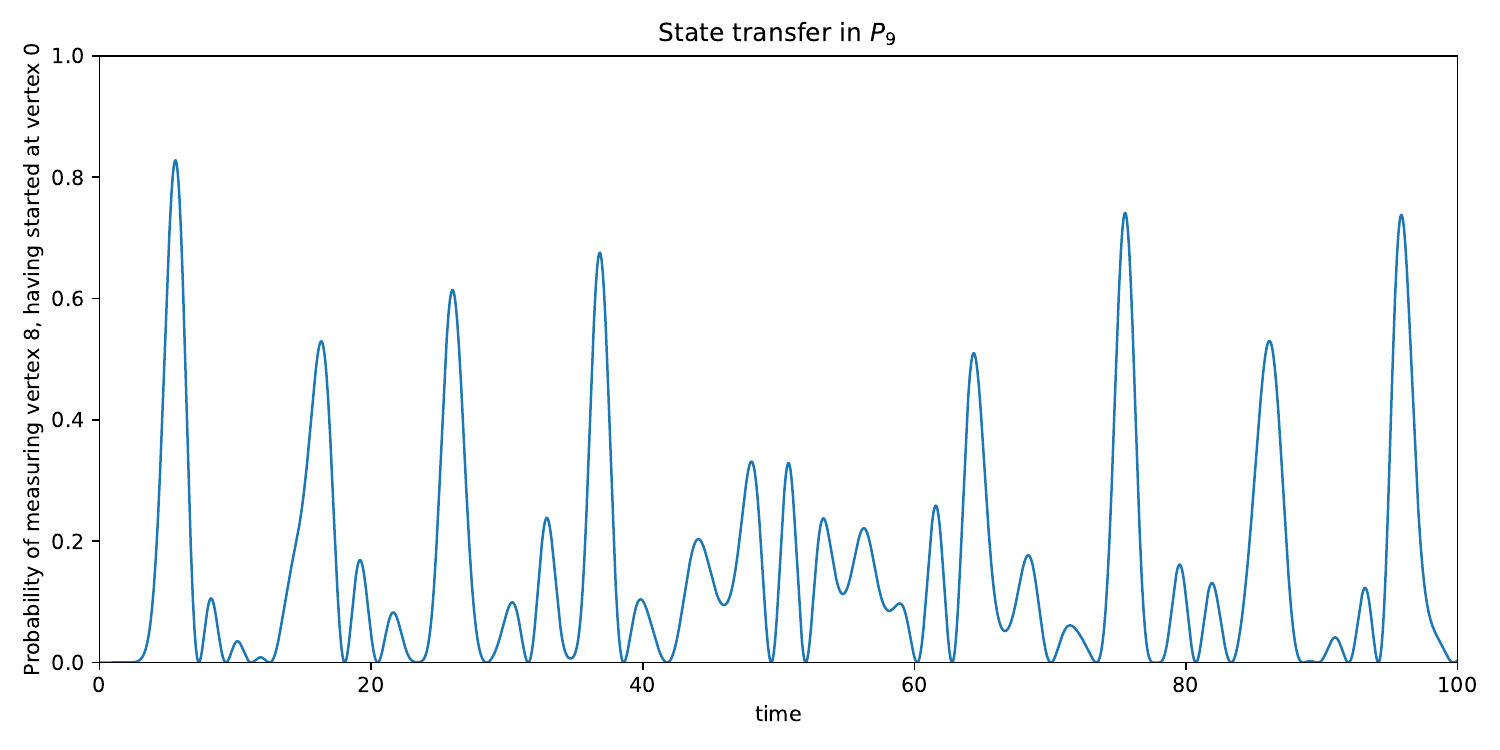}
        \par\vspace{1ex}
        \textbf{(b)} $(U(t)\circ \comp{U(t)})_{0,8}$
  \end{minipage}
    \caption{The probability of measuring at vertex 8, having started at vertex 0, from time $0$ to $100$, in the path on $9$ vertices. This graph has pretty good state transfer between vertices $0$ and $8$, thus the curve will come arbitrarily close to the $y=1$ line. However, as we see, it is not approaching $1$ very quickly and we cannot easily find, for example, a time when the state probability exceeds 0.9. }
    \label{fig:pgst-p9}
\end{figure}

We have finally arrived at this present work, wherein we give up the guarantee that the state transfer is arbitrarily close to perfect, in exchange for a mathematical framework in which the time of state transfer can be obtained. Here we introduce the relaxed notion of \textsl{peak state transfer}, study its properties, and present a certain family of graphs which achieves this phenomenon at arbitrarily long distances while keeping the size of the graphs bounded by a polynomial on the distance. In peak state transfer the probability of success is possibly bounded away from $1$, but it is still controlled.  For discrete-time quantum walks, two of the authors defined peak state transfer, using the projected transition matrix, in \cite{GuoSch2025}. 

We give a spectral characterization of peak state transfer in 
 Theorem \ref{thm:peakST-char-time}. This allows us to determine if peak state transfer occurs, given the spectral information of a graph, as well as the time at which it occurs.

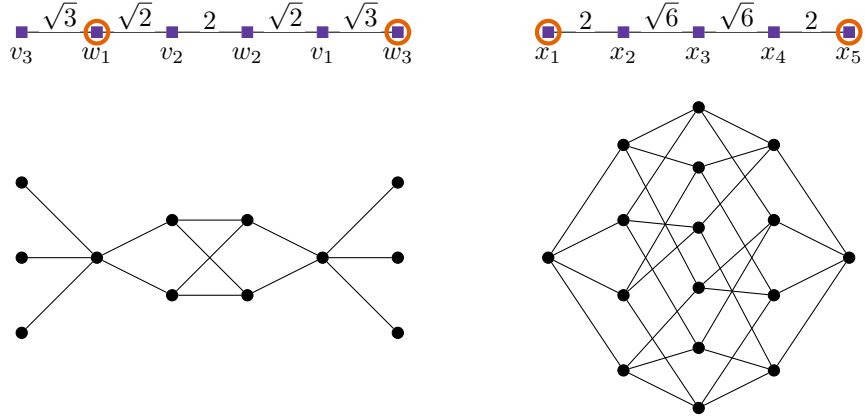
\begin{figure}
    \centering
\begin{tikzpicture}[scale=1.0, every node/.style={scale=1.0}]
% Define styles
\tikzstyle{graph node}=[circle, fill=black, draw=black, inner sep=1.5pt]
\tikzstyle{path node v}=[rectangle, fill=kcolour, draw=kcolour, inner sep=2pt]
\tikzstyle{path node w}=[rectangle, fill=kcolour, draw=kcolour, inner sep=2pt]
\tikzstyle{edge label}=[midway, fill=white, inner sep=1pt]

% ----- First path -----
\node[path node v] (x1) at (0, -1.5) {};
\node[path node w] (x2) at (1, -1.5) {};
\node[path node v] (x3) at (2, -1.5) {};
\node[path node w] (x4) at (3, -1.5) {};
\node[path node v] (x5) at (4, -1.5) {};
\node[path node w] (x6) at (5, -1.5) {};

\draw (x1) -- node[edge label, above] {$\sqrt{3}$} (x2);
\draw (x2) -- node[edge label, above] {$\sqrt{2}$} (x3);
\draw (x3) -- node[edge label, above] {$2$} (x4);
\draw (x4) -- node[edge label, above] {$\sqrt{2}$} (x5);
\draw (x5) -- node[edge label, above] {$\sqrt{3}$} (x6);

\draw[ultra thick, vcolour] (x2) circle [radius=1.5mm];
\draw[ultra thick, vcolour] (x6) circle [radius=1.5mm];

\node[below] at (x1.south) {$v_3$};
\node[below] at (x2.south) {$w_1$};
\node[below] at (x3.south) {$v_2$};
\node[below] at (x4.south) {$w_2$};
\node[below] at (x5.south) {$v_1$};
\node[below] at (x6.south) {$w_3$};

% ----- Second path (offset to the right) -----
\node[path node v] (y1) at (7, -1.5) {};
\node[path node w] (y2) at (8, -1.5) {};
\node[path node v] (y3) at (9, -1.5) {};
\node[path node w] (y4) at (10, -1.5) {};
\node[path node v] (y5) at (11, -1.5) {};

\draw (y1) -- node[edge label, above] {$2$} (y2);
\draw (y2) -- node[edge label, above] {$\sqrt{6}$} (y3);
\draw (y3) -- node[edge label, above] {$\sqrt{6}$} (y4);
\draw (y4) -- node[edge label, above] {$2$} (y5);
\draw[ultra thick, vcolour] (y1) circle [radius=1.5mm];
\draw[ultra thick, vcolour] (y5) circle [radius=1.5mm];
\node[below] at (y1.south) {$x_1$};
\node[below] at (y2.south) {$x_2$};
\node[below] at (y3.south) {$x_3$};
\node[below] at (y4.south) {$x_4$};
\node[below] at (y5.south) {$x_5$};
\begin{scope}[shift={(0, -3)}]
  % Everything in here is moved down by 2.5 units

\node[graph node] (x1) at (0, -1.5) {};
\node[graph node] (x1a) at (0, -2.5) {};
\node[graph node] (x1b) at (0, -0.5) {};
\node[graph node] (x2) at (1, -1.5) {};
\node[graph node] (x3) at (2, -1) {};
\node[graph node] (x3a) at (2, -2) {};
\node[graph node] (x4) at (3, -1) {};
\node[graph node] (x4a) at (3, -2) {};
\node[graph node] (x5) at (4, -1.5) {};
\node[graph node] (x6) at (5, -1.5) {};
\node[graph node] (x6a) at (5, -0.5) {};
\node[graph node] (x6b) at (5, -2.5) {};

\draw (x1) -- (x2);
\draw (x1a) -- (x2);
\draw (x1b) -- (x2);
\draw (x2) --(x3);
\draw (x2) --(x3a);
\draw (x3) --  (x4);
\draw (x3a) --  (x4);
\draw (x3) --  (x4a);
\draw (x3a) --  (x4a);
\draw (x4) --  (x5);
\draw (x4a) --  (x5);
\draw (x5) --  (x6);
\draw (x5) --  (x6a);
\draw (x5) --  (x6b);

% ----- Second path (offset to the right) -----
\node[graph node] (y1111) at (7, -1.5) {};
\node[graph node] (y1110) at (8, -3) {};
\node[graph node] (y1101) at (8, -2) {};
\node[graph node] (y1011) at (8, -1) {};
\node[graph node] (y0111) at (8, 0) {};

\node[graph node] (y1100) at (9, -3.5) {};
\node[graph node] (y1010) at (9, -2.7) {};
\node[graph node] (y0110) at (9, -1.9) {};
\node[graph node] (y1001) at (9, -1.1) {};
\node[graph node] (y0101) at (9, -0.3) {};
\node[graph node] (y0011) at (9, 0.5) {};

\node[graph node] (y1000) at (10, -3) {};
\node[graph node] (y0100) at (10, -2){};
\node[graph node] (y0010) at (10, -1){};
\node[graph node] (y0001) at (10,  0 ) {};
\node[graph node] (y0000) at (11, -1.5) {};

\draw (y1111) --  (y1110);
\draw (y1111) --  (y1101);
\draw (y1111) --  (y1011);
\draw (y1111) --  (y0111);

\draw (y1101) --  (y1100);
\draw (y1101) --  (y1001);
\draw (y1101) --  (y0101);

\draw (y1110) --  (y1100);
\draw (y1110) --  (y1010);
\draw (y1110) --  (y0110);

\draw (y1011) --  (y1010);
\draw (y1011) --  (y1001);
\draw (y1011) --  (y0011);

\draw (y0111) --  (y0110);
\draw (y0111) --  (y0101);
\draw (y0111) --  (y0011);

\draw (y0010) --  (y0011);
\draw (y0010) --  (y0110);
\draw (y0010) --  (y1010);

\draw (y0001) --  (y0011);
\draw (y0001) --  (y0101);
\draw (y0001) --  (y1001);

\draw (y0100) --  (y0101);
\draw (y0100) --  (y0110);
\draw (y0100) --  (y1100);

\draw (y1000) --  (y1001);
\draw (y1000) --  (y1010);
\draw (y1000) --  (y1100);

\draw (y0001) -- (y0000);
\draw (y0010) -- (y0000);
\draw (y0100) -- (y0000);
\draw (y1000) -- (y0000);

\end{scope}
\end{tikzpicture}

    \caption{\textit{Left:} an example of a weighted graph with peak state transfer between vertices at distance $4$ from Section \ref{sec:fam-high-peak-wtd-path}, with its realization as an unweighted graph under it. \textit{Right:} the weighted path coming from the hypercube of dimension $4$, with the hypercube under it.  }
    \label{fig:peak1}
\end{figure}

With peak state transfer, the output state need not be exactly the same as the input state. We give  a tight lower bound on the fidelity of the output state with the input state, as well as a lower bound on the probability of successfully transferring a qubit state with peak state transfer.

In Section \ref{sec:fam-high-peak-wtd-path}, we illustrate the usefulness of peak state transfer by giving a special infinite family admitting peak state state transfer where the transfer probability is approximately $\pi/4 \approx 0.78$. It is also a family of graphs for which transfer does not add a relative phase at $\ket 1$ (as opposed to many other known examples of perfect or pretty good state transfer), hence these wires are also good to transfer arbitrary qubit states.

This family of weighted paths is related to a family of unweighted graphs that admit peak state transfer in an equivalent manner; see Figure \ref{fig:peak1}. In these unweighted graphs, the peak state transfer occurs over distance $2d$ from $1$ vertex on one side to  $d+1$ vertices on the other side, the corresponding probabilities summing to approximately $0.78$ as well. With respect to the transfer distance, the number of vertices of these graphs is quadratic and the number of edges is cubic, resulting in a cheap wire.

Section~III.C of \cite{KayBeyond} discusses the impact of timing errors in sampling a quantum walk. The key observation is that a broader peak in the transfer probability increases tolerance to readout errors, making such dynamics more robust. In particular, when a readout error of size $\delta$ occurs, the resulting fidelity can be lower bounded by a quadratic term depending on the spectral spread (the difference between the largest and smallest eigenvalues). This spectral dependence is further developed in Theorem~2.9 of \cite{Kirkland2015}, which establishes explicit upper bounds on the derivatives of all orders of the transfer amplitude. Specifically, if $p(t)$ denotes the $(u,v)$-entry of $e^{itM}$, then  
\[
\left| \frac{d^k p}{dt^k} \right| \leq 2^{k+1}\rho(M)^k,
\]
where $\rho(M)$ is the spectral radius of $M$.  

For our weighted path family achieving state transfer across distance $2d$, the largest eigenvalue is $d+1$ and the spread is $2(d+1)$. In contrast, the hypercube, which achieves perfect state transfer over the same distance $2d$, has largest eigenvalue $2d$ and spread $4d$. Hence, our construction yields sharper guarantees on the resilience of quantum walks to timing errors. This provides further justification for describing it as a cheap and effective quantum wire.

Regarding the organization of this paper, we give the mathematical background for continuous-time quantum walks in Section \ref{sec:background}. We define peak state transfer rigorously in Section \ref{sec:peak}, look at the fidelity of transfer in Section \ref{sec:fidelity} and then give the spectral characterization in Section \ref{sec:main}. Examples of peak state transfer with respect to adjacency and Laplacian matrices are given in Section \ref{sec:exs}, with the most important family of examples in Section \ref{sec:fam-high-peak-wtd-path}. For the reader’s convenience, we include a table of nomenclature in Section \ref{app:nomen}; this serves as a quick reference for symbols used throughout the paper.

\section{Continuous-time quantum walks}\label{sec:background}

In this section, we will succinctly present the mathematical setup. We model a network of $n$ interacting qubits, where the interactions are defined by a time-independent Hamiltonian; that is, a symmetric matrix that acts on the Hilbert space of dimension $2^n$. The Pauli matrices are 
\[
\sigma_x = \pmat{0 & 1 \\ 1 & 0} , \quad \sigma_y = \pmat{0 & -i\\ i & 0} , \quad \sigma_z = \pmat{1 & 0 \\ 0 & -1} .
\]
We consider a graph $G$ on vertex set $V(G)$ and edge set $E(G)$, assuming $|V(G)| = n$, and define $n$-fold Kronecker products as follows:
\[
\sigma_u^x = I \otimes \cdots \otimes I\otimes  \sigma_x \otimes I \otimes \cdots \otimes I
\]
where $\sigma_x$ appears in the $u$th position, out of the $n$ factors of the Kronecker product. We define $\sigma_u^y, \sigma_u^z$ analogously. 

The \textsl{$XY$-Hamiltonian} is as follows:
\[
H_{XY} = \frac{1}{2} \sum_{uv \in E(G)} J_{uv} (\sigma_u^x\sigma_v^x + \sigma_u^y\sigma_v^y) . 
\]
The \textsl{$XYZ$-Hamiltonian} is as follows:
\[
H_{XYZ} = -\frac{1}{2} \sum_{uv \in E(X)} J_{uv} (\sigma_u^x\sigma_v^x + \sigma_u^y\sigma_v^y + \sigma_u^z\sigma_v^z - I_{2^n})  . 
\]
Taking $J_{uv} = 1$ for all $u,v$, it is straightforward to show that the action of $H_{XY}$ on the subspace spanned by the single excitation states $\ket{10\cdots0}$, $\ket{010\cdots 0}$, etc. is equivalent to the action of the adjacency matrix $A(G)$ on $\cx^n$, with the elementary basis.  Similarly, the action of $H_{XYZ}$ on the single excitation states is equivalent to the action of the Laplacian matrix $L(G)$ on $\cx^n$, with the elementary basis. We recall that the \textsl{adjacency matrix} of a graph $G$, denoted $A(G)$ (or simply $A$ when the context is clear), is given by 
\[
A(G)_{u,v} = \begin{cases}
    1, &\text{ if } u,v \text{ are adjacent}; \\ 
    0, &\text{otherwise.}
\end{cases}
\]
If the edges of the graph have weights, then we put those weights in the corresponding entries instead of a $1$. The \textsl{Laplacian matrix} of a graph $G$, denoted $L(G)$ (or simply $L$ when the context is clear), is given by 
\[
L(G)_{u,v} = \begin{cases}
    d(u), &\text{ if } u= v; \\
    -1, &\text{ if } u,v \text{ are adjacent};\\
    0, &\text{otherwise}
\end{cases}
\]
where $d(u)$ is the \textsl{degree} of vertex $u$, which is the number of edges incident with $u$. 
 
In this paper, we consider a \textsl{continuous-time quantum walk} with transition matrix $U(t)= e^{itM}$ where $M$ is an $n\times n$ symmetric matrix. 

\section{Peak state transfer}\label{sec:peak}

Let $M$ be a $n\times n$ symmetric matrix. The \textsl{spectral decomposition} of $M$ is the following:
\[
M = \sum_{r=0}^d \theta_r E_r,
\]
where
\begin{itemize}
    \item $\theta_0,\ldots, \theta_d$ are the distinct eigenvalues of $M$; and
    \item $E_r$ is the idempotent projection onto the $\theta_r$-eigenspace. 
\end{itemize}
By this we mean that $E_r$ is the unique matrix such that $E_r^2 = E_r$ and, for any $\Zv \in \re^n$, the image $E_r \Zv$ is an eigenvector of $M$ with eigenvalues $\theta_r$. 
Since $M$ is a symmetric matrix, the eigenvalues $\theta_0,\ldots, \theta_d$ are real numbers and $E_0,\ldots, E_d$ are real symmtric matrices. If $f(x)$ is any function which is defined on $\theta_0,\ldots, \theta_d$, then we may define
\[
f(M) = \sum_{r=0}^d f(\theta_r) E_r.
\]
In particular, the transition matrix of the continuous-time quantum walk on $M$ can be written as the following matrix-valued function in time:
\[
U(t) = e^{itM} = \sum_{r=0}^d e^{it\theta_r} E_r. 
\]
We will upper-bound the quantity $|U(t)_{v,u}|$, for all values of $t$, using this expression. 

A pivotal tool used in the study of perfect state transfer is a series of three inequalities where the conditions for equality were used in the characterization of perfect state transfer. These inequalities are described at length in \cite[Section 7.1]{CouGodBook}. Here we will only apply the first inequality; that is, we apply the triangle-inequality to see that 
\[
|U(t)_{v,u}| = \left| \sum_{r=0}^d e^{it\theta_r} (E_r)_{v,u}\right| \leq   \sum_{r=0}^d \left| e^{it\theta_r} (E_r)_{v,u}\right|  =   \sum_{r=0}^d \left| (E_r)_{v,u}\right|  =   \left( \sum_{r=0}^d \left| (E_r)\right| \right)_{v,u}
\]
since $ e^{it\theta_r}$ lies on the unit circle. We will refer to 
\[
B(M) \coloneqq \sum_{r=0}^d \left| (E_r)\right|
\]
as the \textsl{bounding matrix} of $M$, as the $(v,u)$-entry of $B(M)$ upper-bounds  $|U(t)_{v,u}|$ for all values of $t$. For distinct vertices $u,v$, we say that there is  \textsl{peak state transfer} from $u$ to $v$ with respect to $M$ if there exists a time $\tau$ such that 
\[
|U(\tau)_{v,u}|  = B(M)_{v,u}. 
\]
Since $M$ is a real symmetric matrix, there is peak state transfer from $u$ to $v$ at time $\tau$ if and only if there is peak state transfer from $v$ to $u$ at time $\tau$; for this reason, we will sometimes refer to peak state transfer as being `between $u$ and $v$'.

\begin{remark}
 For vertices $u,v$, it is possible that $B(M)_{v,u} = 0$, in which case $U(t)_{v,u}=0$ for all $t$.  We call this \textsl{zero state transfer} between $u$ and $v$. If $M$ is the adjacency matrix of a graph $G$, this implies that all polynomials in $M$ have $0$ in the $u,v$ position and thus $u,v$ are in different connected components of $G$. However, there exist examples of zero state transfer in weighted graphs where the underlying graph is connected, albeit with some negative weights.

 Consider the cycle graph $C_4$ with one edge weighted with $-1$. See Figure \ref{fig:0st-transfer} for the graph, the weighted adjacency matrix and the bounding matrix. Since the $(0,2)$ and $(1,3)$ entries are $0$ in the bounding matrix, this weighted graph has zero state transfer from $0$ to $2$ and from $1$ to $3$.
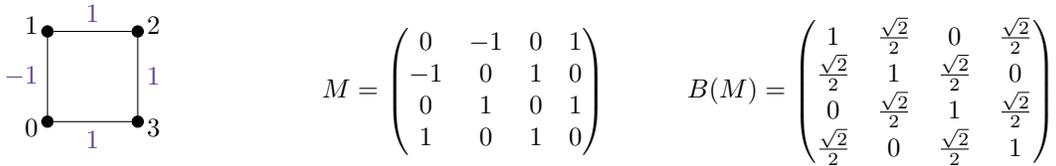
\begin{figure}[htbp]
  \begin{minipage}{0.3\textwidth}
  \centering
 \begin{tikzpicture}[scale=0.6, every node/.style={scale=1.0}]
\tikzstyle{graph node}=[circle, fill=black, draw=black, inner sep=1.5pt]
\node[graph node] (0) at (0,0) {};
\node[graph node] (1) at (2,0) {};
\node[graph node] (2) at (2,2) {};
\node[graph node] (3) at (0,2) {};
\node[left] at (0.south) {$0$};
\node[right] at (1.south) {$3$};
\node[right] at (2.north) {$2$};
\node[left] at (3.north) {$1$};
\draw (0) -- (1) -- (2) -- (3) -- (0);
\node[left] at (0,1) {\textcolor{kcolour}{$-1$}};
\node[right] at (2,1) {\textcolor{kcolour}{$1$}};
\node[below] at (1,0) {\textcolor{kcolour}{$1$}};
\node[above] at (1,2) {\textcolor{kcolour}{$1$}};
\end{tikzpicture}
  \end{minipage}
  \begin{minipage}{0.3\textwidth}
  \[
M =   \pmat{0 & -1 & 0 & 1 \\
-1 & 0 & 1 & 0 \\
0 & 1 & 0 & 1 \\
1 & 0 & 1 & 0
}
  \]
\end{minipage}
\quad
  \begin{minipage}{0.3\textwidth}
  \[
B(M)=   \pmat{1 & \frac{\sqrt{2}}{2} & 0 & \frac{\sqrt{2}}{2}\\
\frac{\sqrt{2}}{2}& 1 & \frac{\sqrt{2}}{2} & 0 \\
0 & \frac{\sqrt{2}}{2} & 1 & \frac{\sqrt{2}}{2}\\
\frac{\sqrt{2}}{2} & 0& \frac{\sqrt{2}}{2} & 1
}
  \]
\end{minipage}
\caption{\textit{From left to right:} a weighted graph on vertices $0,1,2,3$ with edge weights, the weighted adjacency matrix, and the corresponding bounding matrix.  \label{fig:0st-transfer}}
\end{figure}

\end{remark}

\section{Fidelity of peak state transfer}\label{sec:fidelity}

In the original setting of state transfer studied in \cite{BoseQuantumComPaths,ChrDatEke2004}, the aim is to move an arbitrary unknown $1$-qubit state $\ket{\psi} = \alpha\ket{0} + \beta\ket{1}$ between the ends of a `quantum wire', for instance as a subroutine for quantum algorithms, or to move `quantum information' between quantum computers. This is to happen by time evolution corresponding to a suitable $XY(Z)$-Hamiltonian $H$ on the $n$-qubit wire, in order to minimize the amount of necessary outside interactions with the wire. Let $\ket{0^n}$ denote the $n$-qubit ground state $\ket{0\ldots 0}$ and let $\ket{j} = \ket{0 \ldots 010 \ldots 0}$ denote the single excitation state with a $\ket{1}$ on the $j$-th qubit (the difference between the qubit $\ket{1} \in \cx^2$ and the single excitation state $\ket{1}\in (\cx^2)^{\otimes n}$ should be clear from the context). If the initial state $\ket{\Psi(0)}$ of the system is given by
\[
\ket{\Psi(0)} = \ket{\psi} \otimes \ket{0^{n-1}} = \alpha\ket{0^n}  + \beta\ket{1}
\]
the goal is to bring it to the state
\[
\ket{\Phi} = \ket{0^{n-1}} \otimes \ket{\psi} = \alpha\ket{0^n} + \beta\ket{n}.
\]
Since $\ket{0^{n}}$ is an eigenstate of $H$ for the eigenvalue $0$ and since $H$ acts invariantly on the subspace spanned by the single excitation states $\ket{j}$, the state $\ket{\Psi(t)}$ of the system at time $t$ can be written as
\[
\ket{\Psi(t)} = \alpha e^{itH}\ket{0^n} + \beta e^{itH}\ket{1} = \alpha\ket{0^n} + \beta \sum_{j=1}^n U(t)_{j,1}\ket{j},
\]
where $U(t) = e^{itM}$ for the adjacency or Laplacian matrix $M$ of the underlying graph describing the qubit interactions. If there is perfect state transfer from vertex $1$ to vertex $n$ at time $\tau$, then $U(\tau)_{n,1} = e^{i\theta}$ for some phase factor $e^{i\theta}$ and $U(\tau)_{j,1} = 0$ for $j \neq n$, hence $\ket{\Psi(\tau)} = \ket{0^{n-1}} \otimes (\alpha\ket{0} + \beta e^{i\theta}\ket{1})$. After correcting for the phase $\theta$, which does not depend on $\ket{\psi}$, the desired state $\ket{\Phi}$ has been attained.

Peak state transfer is a relaxation of perfect state transfer, hence it is a more prevalent property (see Table \ref{tab:peakST_table}), but it comes at a cost. There are two aspects to this cost: with peak state transfer, the probability of transferring the state is not necessarily $1$, and the transferred state is not exactly the same as the input state. We will make these statements precise in the following lemma. The \textsl{fidelity} between a pair of quantum states $\ket{\psi}$ and $\ket{\phi}$ is defined by $F(\ket{\psi},\ket{\phi}) \coloneqq |\braket{\psi}{\phi}|$. Here we give a lower bound; Bose \cite{BoseQuantumComPaths} instead looked at the average fidelity over all $1$-qubit states.

\begin{lemma}\label{lem:fidelity}
Let $H \in \{H_{XY},H_{XYZ}\}$ be such that there is peak state transfer from vertex $1$ to vertex $n$ at time $\tau$ in the underlying graph. Moreover, let the initial state of the system be
\[
\ket{\psi} \otimes \ket{0^{n-1}}
\]
for some arbitrary qubit state $\ket{\psi} = \alpha\ket{0} + \beta\ket{1}$. Measuring the first $n-1$ qubits yields the state $\ket{0^{n-1}} \otimes \ket{\phi}$ with probability
\[
p \coloneqq |\alpha|^2 + |\beta|^2 B(M)_{n,1}^2 \geq B(M)_{n,1}^2
\]
where $M$ is the adjacency or Laplacian matrix of the graph, depending on the choice of $H$, and
\[
\ket{\phi} = \frac{1}{\sqrt{p}} (\alpha\ket{0} + \beta e^{i\theta }B(M)_{n,1}\ket{1})
\]
for some phase factor $e^{i\theta}$. If $\ket{\phi'}$ is the state obtained from $\ket{\phi}$ by correcting for the phase $\theta$, then
\[
F(\ket{\psi},\ket{\phi'}) \geq \frac{2\sqrt{B_{n,1}(M)}}{1+ B_{n,1}(M)}.
\]
\end{lemma}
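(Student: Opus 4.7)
The plan is to follow the formula for $\ket{\Psi(t)}$ already derived in the paragraphs preceding the lemma, apply it at $t = \tau$, then carry out the measurement on the first $n-1$ qubits and minimize the resulting fidelity over all choices of $\ket{\psi}$.

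First, I would use the peak state transfer hypothesis to write $U(\tau)_{n,1} = e^{i\theta} B(M)_{1,n}$ for some phase $\theta$ depending only on $\tau$. Since $\ket{0^n}$ and $\ket{n}$ are the only kets appearing in $\ket{\Psi(\tau)} = \alpha\ket{0^n} + \beta \sum_{j=1}^n U(\tau)_{j,1}\ket{j}$ whose first $n-1$ tensor factors are $\ket{0}$, projecting onto that measurement outcome leaves an unnormalized vector $\alpha\ket{0^n} + \beta e^{i\theta}B(M)_{1,n}\ket{n}$ on the last qubit. Its squared norm is the claimed probability $p = |\alpha|^2 + |\beta|^2 B(M)_{1,n}^2$, and using $|\alpha|^2 + |\beta|^2 = 1$ together with $B(M)_{1,n} \leq 1$ gives $p \geq B(M)_{1,n}^2$. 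Normalizing yields the stated $\ket{\phi}$, and correcting the phase $\theta$ (independent of $\alpha,\beta$) gives $\ket{\phi'} = \frac{1}{\sqrt{p}}(\alpha\ket{0} + \beta B(M)_{1,n}\ket{1})$.

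For the fidelity bound, I would compute
\[
F(\ket{\psi},\ket{\phi'}) = |\braket{\psi}{\phi'}| = \frac{|\alpha|^2 + |\beta|^2 B(M)_{1,n}}{\sqrt{|\alpha|^2 + |\beta|^2 B(M)_{1,n}^2}}
\]
and minimize over $\ket{\psi}$. Writing $b \coloneq B(M)_{1,n}$ and $x \coloneq |\beta|^2 \in [0,1]$, this becomes $F(x) = \frac{1 - x(1-b)}{\sqrt{1 - x(1-b^2)}}$. Differentiating and using $1-b^2 = (1-b)(1+b)$ to simplify, I expect the numerator of $F'(x)$ to reduce to a constant multiple of $x(1+b) - 1$, so the unique interior critical point is $x^\ast = \frac{1}{1+b}$ and it is a minimum. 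Substituting back, $1 - x^\ast(1-b) = \frac{2b}{1+b}$ and $1 - x^\ast(1-b^2) = b$, giving $F(x^\ast) = \frac{2b/(1+b)}{\sqrt{b}} = \frac{2\sqrt{b}}{1+b}$, which is the claimed bound.

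The only mildly technical step is the one-variable optimization, and that reduces to a clean calculation once the factorization $1-b^2 = (1-b)(1+b)$ is used; no real obstacle is anticipated. The bound is in fact tight, being attained by $|\alpha|^2 = \frac{b}{1+b}$ and $|\beta|^2 = \frac{1}{1+b}$.
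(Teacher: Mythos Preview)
Your proposal is correct and follows essentially the same argument as the paper's proof: compute the post-measurement state, reduce the fidelity to a one-variable function of $|\beta|^2$ (the paper uses $|\alpha|^2$ instead, a cosmetic difference), and locate the interior minimum by differentiation. The only other minor distinction is that the paper minimizes $F^2$ rather than $F$, but since both attain the same critical point $|\alpha|^2 = b/(1+b)$, $|\beta|^2 = 1/(1+b)$, the arguments are equivalent.
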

\begin{proof}
Since there is peak state transfer, the state after time $\tau$ is 
\[
\alpha\ket{0^n} + \beta e^{i\theta}B(M)_{n,1}\ket{n} + \beta \sum_{j=1}^{n-1} U(\tau)_{j,1}\ket{j},
\]
for some phase factor $e^{i\theta}$. Measuring the first $n-1$ qubits makes the state collapse to $\ket{0^{n-1}} \otimes \ket{\phi}$ with probability $p$. Correcting for the phase factor $e^{i\theta}$, we obtain the state $\ket{0^{n-1}} \otimes \ket{\phi'}$ where
\[
\ket{\phi'} = \frac{1}{\sqrt{p}} (\alpha\ket{0} + \beta B(M)_{n,1}\ket{1}).
\]
Writing $b \coloneqq B(M)_{n,1}$ for readability, we have that
\[
F(\ket{\psi},\ket{\phi'})^2 = \frac{(|\alpha|^2 + |\beta|^2b)^2}{|\alpha|^2 + |\beta|^2b^2}.
\]
Since $|\alpha|^2 = 1 - |\beta|^2 \in [0,1]$, we can lower bound $F(\ket{\psi},\ket{\phi'})^2$ by determining the minimum of
\[
f(x) \coloneqq \frac{(x + (1-x)b)^2}{x + (1-x)b^2}
\]
on the interval $[0,1]$. Taking the derivative with respect to $x$ yields a numerator that equals zero on this interval if and only if $x = b/(b+1)$; since $f$ is upper-bounded by $1$ and $f(0) = f(1) = 1$, the function takes on a minimum on (0,1) at this value of $x$, and so does $\sqrt{f}$, yielding
\[
F(\ket{\psi},\ket{\phi'}) \geq \sqrt{f(b/(b+1)}) = \frac{2\sqrt{b}}{1+b}. \qedhere 
\]
\end{proof}

\section{Spectral characterization}\label{sec:main}

Let $M$ be a symmetric matrix with  spectral decomposition $M = \sum_{r=0}^d \theta_r E_r$, indexed by some set $X$. The \textsl{eigenvalue support of $u \in X$} is 
\[
\ev_M(u) = \{\theta_r :  (E_r)_{u,u} \neq 0 \}  
\]
For $u,v \in X$, the  \textsl{positive  mutual eigenvalue support of $u$ and $v$}
\[
\ev^+_M(u,v) = \{\theta_r :  (E_r)_{v,u} > 0 \}  
\]
and the \textsl{negative mutual  eigenvalue support of $u$ and $v$} is 
\[
\ev^-_M(u,v) = \{\theta_r :  (E_r)_{v,u}  < 0 \}.
\]
The union of these two sets, which we denote by $\ev_M(u,v)$, is simply called the \textsl{mutual eigenvalue support of $u$ and $v$}; it consists of those eigenvalues for which the $(u,v)$-entry of the corresponding spectral idempotent is nonzero. We will omit the subscripts and write $\ev^{(\pm)}(u,v)$ when the context is clear. 

All of our results hold when we choose $M$ to  be the adjacency or Laplacian matrix of a graph, but we will prove it in a more generalized setting, so that we can apply our results to certain weighted graphs. In this setting, it is only necessary for the eigenvalues in the mutual eigenvalue support of $u,v$ to have certain properties. To this end, we say that $(M,u,v)$ is \textsl{eligible} if $\ev_M(u,v)$  consists of algebraic integers and is closed under algebraic conjugation. We note that if the characteristic polynomial of $M$ has integer coefficients --- which holds, in particular, if the entries of $M$ are integers --- then it follows that $(M,u,v)$ is eligible for every pair of vertices $u,v$. The following appears under other guises in several sources, including \cite{CouGodBook}.

\begin{lemma}\label{lem:technical lemma}
    Let  $M$ be a symmetric $n\times n$ matrix and let $B=B(M)$ be the bounding matrix. Then 
    \[
    |U(\tau)_{v,u}|  \leq B(M)_{v,u},
    \]
    and equality holds (i.e.\ there is peak state transfer from $u$ to $v$ at time $\tau$) if and only if there exists a complex number $\gamma$ such that for $\theta_r\in \ev^{\pm}(u,v)$, we have 
\[
e^{it\theta_r} =  \pm \gamma. 
\]
\end{lemma}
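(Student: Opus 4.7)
The plan is to recognize the statement as the equality case of the triangle inequality applied to the spectral expansion
\[
U(\tau)_{v,u} \;=\; \sum_{r=0}^d e^{i\tau\theta_r}(E_r)_{v,u}.
\]
First I would apply the triangle inequality together with $|e^{i\tau\theta_r}|=1$ to obtain
\[
|U(\tau)_{v,u}| \;\leq\; \sum_{r=0}^d |e^{i\tau\theta_r}(E_r)_{v,u}| \;=\; \sum_{r=0}^d |(E_r)_{v,u}| \;=\; B(M)_{v,u},
\]
which gives the inequality. Note that the entries $(E_r)_{v,u}$ are real because each $E_r$ is a real symmetric matrix.

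Next I would invoke the standard equality characterization for complex sums: $\bigl|\sum_r z_r\bigr| = \sum_r |z_r|$ holds if and only if all nonzero $z_r$ share the same argument, equivalently, there exists $\gamma\in\cx$ with $|\gamma|=1$ such that $z_r = |z_r|\gamma$ for every $r$ with $z_r\neq 0$. Setting $z_r = e^{i\tau\theta_r}(E_r)_{v,u}$, the condition $z_r\neq 0$ is equivalent to $\theta_r\in \ev^+(u,v)\cup \ev^-(u,v)$, and I would split into two cases according to the sign of $(E_r)_{v,u}$. For $\theta_r\in \ev^+(u,v)$ the factor $(E_r)_{v,u}$ is positive, so $|z_r|=(E_r)_{v,u}$ and the equality condition collapses to $e^{i\tau\theta_r}=\gamma$. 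For $\theta_r\in \ev^-(u,v)$ the factor is negative, $|z_r|=-(E_r)_{v,u}$, and the condition becomes $e^{i\tau\theta_r}=-\gamma$. Reading the two cases together recovers exactly $e^{i\tau\theta_r}=\pm\gamma$ as in the statement, with the sign matching the sign of $(E_r)_{v,u}$.

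Finally I would dispose of the trivial edge case $B(M)_{v,u}=0$: here $\ev^+(u,v)$ and $\ev^-(u,v)$ are both empty, the bound reads $|U(\tau)_{v,u}|\leq 0$ and is always an equality, while the condition on $\gamma$ is vacuous.

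There is no real obstacle; the proof is essentially a one-line application of the triangle inequality plus its equality case. The only point requiring care is the sign bookkeeping when $(E_r)_{v,u}<0$, which produces the characteristic $\pm\gamma$ dichotomy and explains why the mutual eigenvalue support must be split into $\ev^+(u,v)$ and $\ev^-(u,v)$ in the statement.
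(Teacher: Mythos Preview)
Your proof is correct and follows essentially the same approach as the paper: apply the triangle inequality to the spectral expansion of $U(\tau)_{v,u}$, then use the equality case (all nonzero summands share the same argument) together with the realness of $(E_r)_{v,u}$ to obtain the $\pm\gamma$ dichotomy according to the sign of $(E_r)_{v,u}$. Your treatment is slightly more thorough in that you explicitly address the edge case $B(M)_{v,u}=0$, which the paper omits.
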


\begin{proof}
    We will repeat the argument in \cref{sec:peak} and examine the equality cases. 
\begin{align*}
|U(t)_{v,u}| &= \left| \sum_{r=0}^d e^{it\theta_r} (E_r)_{v,u}\right|    \\
 &\leq   \sum_{r=0}^d \left| e^{it\theta_r} (E_r)_{v,u}\right| =
  B(M)_{v,u}.
\end{align*}
The inequality here comes from the triangle inequality and holds with equality if and only if the complex numbers 
\[e^{it\theta_0} (E_0)_{v,u}, \ldots, e^{it\theta_d} (E_d)_{v,u} \]
all have the same argument. Since the entries of $E_r$ are real, this holds if and only if there exists a complex number $\gamma$ such that 
\[
e^{it\theta_r} = \gamma, \quad \text{ for } \theta_r\in \ev^+(u,v)
\quad \text{ and } \quad 
e^{it\theta_r} = -\gamma, \quad \text{ for } \theta_r\in \ev^-(u,v)
\]
and the statement follows.
\end{proof}

\begin{corollary}
\label{cor:even_odd_multiples_of_pi}
Let $M$ be a symmetric $n\times n$ matrix and $u,v \in \{1,\ldots,n\}$ be distinct. Moreover, let $\theta_s$ be an eigenvalue in the positive mutual eigenvalue support of $u$ and $v$. There is peak state between $u$ and $v$ with respect to $M$ at time $\tau > 0$ if and only if the following conditions hold:
\begin{enumerate}[(i)]
    \item $\tau(\theta_s - \theta_r)$ is an even multiple of $\pi$ for all $\theta_r \in \ev^+(u,v)$; and
    \item $\tau(\theta_s - \theta_r)$ is an odd multiple of $\pi$ for all $\theta_r \in \ev^-(u,v)$.
\end{enumerate}
If these conditions hold, the peak state transfer occurs with phase factor $e^{i\tau\theta_s}$.
\end{corollary}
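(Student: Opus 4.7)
The plan is to derive this as a direct unpacking of Lemma \ref{lem:technical lemma}. By that lemma, peak state transfer between $u$ and $v$ at time $\tau$ is equivalent to the existence of a complex number $\gamma$ with $e^{i\tau\theta_r}=\gamma$ for every $\theta_r\in\ev^+(u,v)$ and $e^{i\tau\theta_r}=-\gamma$ for every $\theta_r\in\ev^-(u,v)$. Since $\theta_s\in\ev^+(u,v)$ by hypothesis, the natural choice is $\gamma=e^{i\tau\theta_s}$, and I plan to rewrite both sides of the equivalence in terms of this specific $\gamma$.

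For the forward direction, I would assume peak state transfer at time $\tau$ and use Lemma \ref{lem:technical lemma} to obtain some $\gamma$; because $\theta_s\in\ev^+(u,v)$, necessarily $\gamma=e^{i\tau\theta_s}$. Then for $\theta_r\in\ev^+(u,v)$ the identity $e^{i\tau\theta_r}=e^{i\tau\theta_s}$ rearranges to $e^{i\tau(\theta_s-\theta_r)}=1$, which holds iff $\tau(\theta_s-\theta_r)\in 2\pi\mathbb{Z}$, giving (i). For $\theta_r\in\ev^-(u,v)$ we get $e^{i\tau(\theta_s-\theta_r)}=-1$, i.e.\ $\tau(\theta_s-\theta_r)\in\pi(2\mathbb{Z}+1)$, giving (ii). For the reverse direction, I would assume (i) and (ii), set $\gamma=e^{i\tau\theta_s}$, and run the same equivalences backwards to verify the hypothesis of Lemma \ref{lem:technical lemma}.

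For the phase claim, my plan is to compute $U(\tau)_{v,u}$ explicitly once peak state transfer has been established. Using the spectral decomposition,
\[
U(\tau)_{v,u}=\sum_{r=0}^d e^{i\tau\theta_r}(E_r)_{v,u}=\gamma\!\!\sum_{\theta_r\in\ev^+(u,v)}\!\!(E_r)_{v,u}+(-\gamma)\!\!\sum_{\theta_r\in\ev^-(u,v)}\!\!(E_r)_{v,u}=\gamma\sum_{r=0}^d|(E_r)_{v,u}|=\gamma\, B(M)_{v,u},
\]
where the grouping uses the sign characterization of $\ev^{\pm}(u,v)$. Since $B(M)_{v,u}$ is a nonnegative real number, the argument of $U(\tau)_{v,u}$ coincides with $\arg(\gamma)=\tau\theta_s$, which is the stated phase.

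There is no real obstacle here: the statement is essentially a concrete reformulation of the abstract condition in Lemma \ref{lem:technical lemma} in terms of a distinguished reference eigenvalue $\theta_s$, and the phase computation is an immediate consequence of factoring $\gamma$ out of the spectral sum.
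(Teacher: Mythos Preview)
Your proposal is correct and follows essentially the same approach as the paper: both invoke Lemma~\ref{lem:technical lemma}, fix $\gamma=e^{i\tau\theta_s}$ using $\theta_s\in\ev^+(u,v)$, and translate the conditions $e^{i\tau\theta_r}=\pm\gamma$ into the parity conditions on $\tau(\theta_s-\theta_r)/\pi$. Your explicit computation of $U(\tau)_{v,u}=\gamma\,B(M)_{v,u}$ to justify the phase claim is a nice addition that the paper leaves implicit.
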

\begin{proof}
By Lemma \ref{lem:technical lemma}, there is peak state transfer between $u$ and $v$ at time $\tau$ if and only if 
\[
e^{i\tau\theta_r} = \pm e^{i\tau\theta_s} = \pm\gamma
\]
for all $\theta_r \in \ev^{\pm}(u,v)$, where $\gamma = e^{i\tau\theta_s}$. This happens if and only if $e^{i\tau(\theta_r - \theta_s)} = \pm 1$ whenever $\theta_r \in \ev^\pm(u,v)$ and the statement follows.
\end{proof}

A consequence of \ref{cor:even_odd_multiples_of_pi} is that for peak state transfer to occur, the following condition needs to hold: for any $\theta_x, \theta_y, \theta_w,\theta_z \in \ev(u,v)$ such that $\theta_w \neq \theta_z$, the number
\[
\frac{\theta_x - \theta_y}{\theta_w-\theta_z}
\]
is rational. This condition is called the \textsl{ratio condition} in the literature, see for example \cite{God2010}.  

In the following theorem, we give a characterization of peak state transfer with respect to some matrix $M$, provided that it satisfies some (weak) condition on its eigenvalues. It is implicitly assumed that the mutual eigenvalue support is non-empty so that there is no zero state transfer.
The following draws heavily on previous work and our only contribution here is to make the statement explicit and to lift to the more general setting where $M$ need not have integer entries. 

\begin{theorem}\label{thm:peakST-char-time}
Let $M$ be a symmetric $n \times n$ matrix and assume that $u$ and $v$ are distinct and such that the triple $(M,u,v)$ is eligible. There is peak state transfer between $u$ and $v$ if and only if the following conditions hold:
\begin{enumerate}[(i)]
    \item Every eigenvalue in $\ev(u,v)$ is a quadratic integer. In particular, there exists a squarefree integer $D \geq 1$, an integer $a$ and for each $\theta_r \in \ev(u,v)$ an integer $b_r$ having the same parity as $a$ such that
    \[
    \theta_r = \frac{1}{2}(a + b_r\sqrt{D})
    \]
    for all $\theta_r \in \ev(u,v)$.
    \item Let $s$ be such that $\theta_s \in \ev^+(u,v)$, and let $g = \gcd\{c_r : \theta_r \in \ev(u,v)\}$, where the $c_r$ are the integers given by
    \[
    c_r \coloneqq \frac{\theta_s - \theta_r}{\sqrt{D}} = \frac{b_s - b_r}{2}.
    \]
    Then the following are true for all $\theta_r \in \ev(u,v)$:
    \begin{enumerate}[(a)]
        \item $c_r/g$ is even if and only if $\theta_r \in \ev^+(u,v)$, and
        \item $c_r/g$ is odd if and only if $\theta_r \in \ev^-(u,v)$.
    \end{enumerate}
\end{enumerate}
If these conditions hold, the peak state transfer occurs precisely at all times that are odd multiples $\tau$ of
\[
\tau_0 \coloneqq \frac{\pi}{g\sqrt{D}}
\]
and the phase factor of peak state transfer is then $e^{i\tau \theta_s}$.
\end{theorem}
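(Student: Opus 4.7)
My plan is to prove \cref{thm:peakST-char-time} by reinterpreting \cref{cor:even_odd_multiples_of_pi} arithmetically. The Corollary reduces peak state transfer at time $\tau>0$ to the integrality of $\tau(\theta_s-\theta_r)/\pi$ for every $\theta_r\in\ev(u,v)$, with even/odd parity prescribed by $\ev^\pm(u,v)$. The task therefore splits into two halves: extracting the quadratic-integer structure~(i) from this integrality together with eligibility, and then rewriting the parity condition in terms of~(ii) and the base time $\tau_0$.

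For the forward direction, suppose peak state transfer occurs at some $\tau>0$. Since $\tau(\theta_s-\theta_r)\in\pi\ints$ for every $r$, any two nonzero differences $\theta_s-\theta_r$ and $\theta_s-\theta_{r'}$ have rational ratio, so the $\rats$-span of $\ev(u,v)-\theta_s$ is a one-dimensional subspace $\rats\delta$ of $\re$. Eligibility tells us that each $\sigma$ in the Galois group of a splitting field $K/\rats$ permutes $\ev(u,v)$, and applying $\sigma$ to a difference produces another difference, so $\sigma(\rats\delta)=\rats\delta$. Since $\sigma$ is $\rats$-linear and has finite order, $\sigma(\delta)/\delta$ is a rational root of unity, hence $\pm 1$; therefore $\delta^2\in\rats$ and $\ev(u,v)\subseteq\rats(\sqrt{D})$ for some squarefree integer $D\ge 1$. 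For $D>1$, applying the nontrivial automorphism $\sqrt{D}\mapsto-\sqrt{D}$ to $\theta_r=\theta_s+q_r\sqrt{D}$ and matching the result against another element of $\ev(u,v)$ shows that $\sigma(\theta_s)-\theta_s\in\rats\sqrt{D}$, which I would then use to conclude that every $\theta_r$ has the same rational part. Combined with the classical description of the ring of algebraic integers of $\rats(\sqrt{D})$, this yields the parametrisation $\theta_r=(a+b_r\sqrt{D})/2$ of~(i), with the parity condition on $a,b_r$ encoding integrality.

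With~(i) established, set $c_r=(b_s-b_r)/2\in\ints$ and $g=\gcd\{c_r\}$, so that $\theta_s-\theta_r=c_r\sqrt{D}$ and the Corollary becomes $\tau c_r\sqrt{D}/\pi\in\ints$ with the prescribed parity. Writing $\tau\sqrt{D}/\pi=m/n$ in lowest terms forces $n\mid g$; setting $h=g/n$, the integer in question is $k_r=mh(c_r/g)$. Because $\sum_r(E_r)_{v,u}=I_{v,u}=0$ for $v\ne u$, the set $\ev^-(u,v)$ is nonempty whenever $\ev(u,v)$ is, so some $k_r$ must be odd; this rules out $m$ or $h$ being even, and both are therefore odd. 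Hence the parity of $k_r$ equals that of $c_r/g$, which is exactly condition~(ii), and $\tau=mh\,\tau_0$ is an odd multiple of $\tau_0$ with phase $\tau\theta_s$ from \cref{lem:technical lemma}. Conversely, assuming~(i) and~(ii) and taking $\tau=(2j+1)\tau_0$, one has $\tau(\theta_s-\theta_r)/\pi=(2j+1)(c_r/g)$, whose parity agrees with~(ii) and thus triggers \cref{cor:even_odd_multiples_of_pi}, yielding peak state transfer at $\tau$ with phase $\tau\theta_s$. The Galois-theoretic step forcing all $\theta_r$ to share a common rational part is where I expect the main obstacle to lie; the remainder is essentially bookkeeping with gcds and parities.
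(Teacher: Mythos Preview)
Your proposal is correct and follows the same overall scaffolding as the paper: reduce via \cref{cor:even_odd_multiples_of_pi} to integrality of $\tau(\theta_s-\theta_r)/\pi$ with prescribed parities, use the ratio condition plus eligibility (Galois closure) to force the quadratic-integer form~(i), and then run a gcd/parity argument for~(ii) and the determination of the times. The gcd part and the converse are essentially identical to the paper.

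The one place you differ is in how you extract $\delta^2\in\rats$. The paper fixes a non-integer $\theta_1$, takes $S$ to be its set of Galois conjugates, observes that the product $\prod_{r\neq q}(\theta_r-\theta_q)$ over $S$ is Galois-invariant hence rational, deduces that the minimal polynomial of $\theta_1-\theta_2$ has the shape $X^m-q$, and uses realness of the eigenvalues to force $m\le 2$. Your route---the Galois group preserves the one-dimensional $\rats$-line $\rats\delta$, each $\sigma$ acts there by a rational root of unity, hence by $\pm 1$---is more conceptual and avoids the explicit product and the appeal to realness. Just be sure to choose $\delta$ as an actual nonzero difference $\theta_s-\theta_{r_0}\in K$ so that $\sigma(\delta)$ is defined.

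The step you flag as the obstacle, namely showing $\theta_s\in\rats(\sqrt{D})$, is exactly the step the paper also isolates, and it closes the way you suspect: since every $\sigma(\theta_s)$ lies in $\ev(u,v)\subseteq\theta_s+\rats\sqrt{D}$, summing over $\sigma\in\mathrm{Gal}(K/\rats)$ gives $[K:\rats]\,\theta_s\in\rats+\rats\sqrt{D}$, because the trace is rational. (The paper phrases this as the sum of conjugates being the negative of a coefficient of the minimal polynomial.) After that, the fact that each $\theta_r-\theta_s$ is an algebraic integer in $\rats\sqrt{D}$ with $D$ squarefree forces the $c_r$ to be integers, and the Dedekind description of the ring of integers of $\rats(\sqrt{D})$ gives the parity statement on $a,b_r$. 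Two small points worth making explicit: you should note that $|\ev(u,v)|\ge 2$ (since $\sum_r(E_r)_{u,v}=0$ with at least one nonzero entry), and that $\tau\sqrt{D}/\pi\in\rats$ follows because some $c_r\neq 0$.
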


\begin{proof}
If conditions (i) and (ii) hold, then
\[
\tau_0(\theta_s - \theta_r) = \frac{c_r\pi}{g}
\]
is an even (odd) multiple of $\pi$ whenever $\theta_r$ is in the positive (negative) mutual eigenvalue support of $u$ and $v$, so there is peak state transfer at time $\tau_0$ by Corollary \ref{cor:even_odd_multiples_of_pi}. Now let $0 < \tau' \leq \tau_0$ denote the smallest time for which $\tau'(\theta_s - \theta_r)$ is an integer multiple of $\pi$. By construction, $\tau' = \tau_0$, which can be argued as follows. Let $m \geq 1$ be the largest integer such that $m\tau' \leq \tau_0$. Then $\tau_0 - m\tau' < \tau'$ and $(\tau_0 - m\tau')(\theta_s - \theta_r)$ is an integer multiple of $\pi$ for all $r$. By the minimality of $\tau'$ it must be the case that $\tau_0 = m\tau'$. Hence for all $r$, we have that $\tau'(\theta_s - \theta_r)/\pi = c_r/(mg)$ is an integer; since $g$ is the greatest common divisor of all the $c_r$, we conclude that $m = 1$. By a similarly flavoured argument, an arbitrary time $\tau$ is such that $\tau(\theta_s - \theta_r)$ is an integer multiple of $\pi$ for all $r$ if and only if $\tau$ is an integer multiple of $\tau_0$. In particular, the conditions of Corollary \ref{cor:even_odd_multiples_of_pi} are satisfied precisely when $\tau$ is an odd multiple of $\tau_0$.

We will now argue that condition (i) is necessary, so we assume that there is peak state transfer between $u$ and $v$ at some time $\tau$. If every eigenvalue in $\ev(u,v)$ is integer, the condition automatically holds with $a = 0$, $b_r = 2\theta_r$ and $D = 1$, so assume that $\theta_1$ is an eigenvalue in the mutual eigenvalue support that is not integer. Let $F$ denote the splitting field of the minimal polynomial $f$ of $\theta_1$ over $\rats$. Since $(M,u,v)$ is eligible, the set $S$ of roots of $f$ is fully contained in $\ev(u,v)$. Let $\theta_2 \neq \theta_1$ be another element of $S$. By the ratio condition, $(\theta_1 - \theta_2)/(\theta_r - \theta_q) \eqcolon \ell_{r,q}$ is rational for all pairs of distinct $\theta_r,\theta_q \in S$. Therefore, since
\[
(\theta_1 - \theta_2)^{|S|(|S|-1)} = \prod_{\substack{\theta_r,\theta_q \in S \\ r \neq q}} (\theta_r - \theta_q)\ell_{r,q}
\]
is invariant under Galois automorphisms of $F$ over $\rats$, it must be rational. In particular, it is an integer because $\theta_1$ and $\theta_2$ are algebraic integers. Let $m \geq 1$ be the smallest integer such that $(\theta_1 - \theta_2)^m \eqcolon q$ is integer, so that the minimal polynomial of $\theta_1 - \theta_2$ is $g \coloneqq X^m - q \in \rats[X]$. Since $\theta_1 - \theta_2 \in F$, which is a normal extension of $\rats$, the splitting field of $g$ over $\rats$ is a subset of $F$, hence all roots of $g$. This implies that $m \leq 2$, as $g$ has at most two real roots. We conclude that $\theta_1 - \theta_2$ is the square root of some integer, i.e.\ it can be written in the form
\[
\theta_1 - \theta_2 = \alpha_{1,2} \sqrt{D_{1,2}}
\]
for a certain integer $\alpha_{1,2}$ and a squarefree integer $D_{1,2} \geq 1$. Let $\theta_r$ and $\theta_q$ be arbitrary distinct eigenvalues in $\ev(u,v)$ (not necessarily in $S$). By the ratio condition, there is a rational number $\ell_{r,q}$ such that $\theta_r - \theta_q = \ell_{r,q} (\theta_1 - \theta_2)$, hence
\[
(\theta_r - \theta_q)^2 = \ell_{r,q}^2 (\theta_1 - \theta_2)^2
\]
is an integer, which means that
\[
\theta_r - \theta_q = \alpha_{r,q} \sqrt{D_{r,q}}
\]
for a certain integer $\alpha_{r,q}$ and squarefree integer $D_{r,q}$. Then for any $\theta_r \neq \theta_q$ and $\theta_x \neq \theta_y$, we have that
\[
\alpha_{r,q}\alpha_{x,y}\sqrt{D_{r,q}D_{x,y}} = (\theta_r - \theta_q)(\theta_x - \theta_y) = \ell_{r,q}\ell_{x,y} (\theta_1 - \theta_2)^2
\]
is an algebraic integer in $\rats$, i.e.\ it is an integer. Because of the squarefreeness, this happens precisely when $D_{r,q} = D_{x,y}$. We conclude that there is a single squarefree integer $D$ such that $D_{r,q} = D$ for all pairs of distinct $r,q$. Moreover, we can write
\[
\sum_{\theta_r \in S} \theta_r = \sum_{r} \theta_1 - \alpha_{1,r}\sqrt{D} = |S|\theta_1 - \sqrt{D}\sum_{r} \alpha_{1,r},
\]
which is an integer, because the negative of left-hand side appears as the coefficient for $X^{|S| - 1}$ in $f$. Hence $\theta_1 \in \rats(\sqrt{D})$; note that since we assumed that $\theta_1$ is not integer, it must be that $D > 1$. More generally,
\[
\theta_r = \theta_1 - \alpha_{1,r}\sqrt{D} \in \rats(\sqrt{D})
\]
for any $\theta_r \in \ev(u,v)$. Due to a result of Dedekind for quadratic integers, we obtain that
\[
\theta_r = \frac{1}{2}(a_r + b_r\sqrt{D})
\]
for some pair of integers $a_r$ and $b_r$ that are either both even or both odd. Moreover, as $\theta_r - \theta_q$ is an integer multiple of $\sqrt{D}$, it must be that $a_r = a_q$ for all $r \neq q$, hence the condition (i) is satisfied for a unique $a$, with all the $b_r$ having the same parity as $a$.

Finally, let $\theta_s \in \ev^{+}(u,v)$; such a $\theta_s$ exists, since $\ev(u,v)$ is non-empty by assumption, and otherwise the $(u,v)$-entries of the idempotents $E_r$ cannot sum to zero.
Like in the statement of condition (ii), let $g$ be the greatest common divisor of the integers
\[
c_r = \frac{\theta_s - \theta_r}{\sqrt{D}} = \frac{b_s - br}{2}
\]
over all $\theta_r \in \ev(u,v)$. Since we assume that there is peak state transfer at time $\tau >0$, Corollary \ref{cor:even_odd_multiples_of_pi} tells us that for all $\theta_r \in \ev(u,v)$,
\[
\frac{\tau(\theta_r - \theta_s)}{\pi} = \frac{\tau\sqrt{D}}{\pi} \cdot c_r
\]
is an integer. Hence $\tau \sqrt{D}/\pi$ is a rational number, say, equal to $p/q$ for some integers $p$ and $q$ that are coprime. As $p c_r/q$ is integer, it must be that $q$ divides $c_r$ for all $\theta_r \in \ev(u,v)$, hence $g = mq$ for some integer $m$. Now the parity of
\[
\frac{\tau(\theta_s - \theta_r)}{\pi} = pm\cdot \frac{c_r}{g}
\]
depends on whether $\theta_r$ is in the positive or negative mutual eigenvalue support of $u$ and $v$. As $\ev^-(u,v)$ is nonempty, $pm$ is necessarily odd, hence the conditions (i) and (ii) in Corollary \ref{cor:even_odd_multiples_of_pi} correspond precisely to the conditions in (a) and (b) in the statement of this theorem.
\end{proof}

If condition (i) of Theorem \ref{thm:peakST-char-time} holds and if some $\theta_r \in \ev_M(u,v)$ is an integer, then $D = 1$ and all elements of $\ev_M(u,v)$ are integer. In particular, the Laplacian matrix of any (weighted) graph always has $0$ as an eigenvalue. If the graph is connected, this eigenvalue has multiplicity $1$ and the corresponding spectral idempotent is a scalar multiple of the all-ones matrix. Hence $0$ is always in the positive mutual eigenvalue support of any pair of vertices and we can take $\theta_s = 0$ in the theorem above to obtain the following corollary:
\begin{corollary}
Let $L$ be the Laplacian matrix of a connected graph $G$ with algebraic integer weights. There is peak state transfer between a pair of distinct vertices $u$ and $v$ of $G$ with respect to $L$ if and only if the following conditions hold:
\begin{enumerate}[(i)]
\item Every $\theta_r \in \ev_L(u,v)$ is an integer.
\item Let $g = \gcd\left(\ev_L(u,v)\right)$. The following are true for all $\theta_r \in \ev_L(u,v)$:
\begin{enumerate}[(a)]
\item $\theta_r/g$ is even if and only if $\theta_r \in \ev_L^+(u,v)$;
\item $\theta_r/g$ is odd if and only if $\theta_r \in \ev_L^-(u,v)$.
\end{enumerate}
\end{enumerate}
If these conditions hold, the peak state transfer occurs precisely at all times that are odd multiples $\tau$ of $\tau_0 = \pi/g$ and the phase factor of the peak state transfer is $1$.
\end{corollary}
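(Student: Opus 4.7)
The plan is to obtain this corollary as a direct specialization of Theorem \ref{thm:peakST-char-time} to $M = L$, with $\theta_s = 0$ playing the anchor role. First I would verify eligibility of $(L,u,v)$: the entries of $L$ are algebraic integers, so its eigenvalues are algebraic integers, and $\ev_L(u,v)$ is closed under algebraic conjugation because each spectral idempotent is a polynomial in $L$. I would then identify the distinguished eigenvalue. Since $G$ is connected, $\ker L$ is one-dimensional and spanned by the all-ones vector, so $E_0 = \tfrac{1}{n}J$ and $(E_0)_{u,v} = 1/n > 0$. Hence $0 \in \ev_L^{+}(u,v)$, and we may set $\theta_s = 0$ in the theorem.

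Next I would deduce condition (i) of the corollary from condition (i) of the theorem. Writing $0 = \theta_0 = (a + b_0\sqrt{D})/2$ and using the irrationality of $\sqrt{D}$ when $D > 1$, we get $a = b_0 = 0$; the parity constraint then makes every $b_r$ even, so $\theta_r = (b_r/2)\sqrt{D}$, which is irrational whenever $b_r \neq 0$. The remark preceding the corollary then gives $D = 1$ (because $0 \in \ev_L(u,v)$ is an integer eigenvalue), so every element of $\ev_L(u,v)$ is an integer, establishing condition (i).

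Finally I would translate condition (ii). With $D = 1$ and $\theta_s = 0$, the integers $c_r = (\theta_s - \theta_r)/\sqrt{D} = -\theta_r$, so $g = \gcd\{c_r\} = \gcd(\ev_L(u,v))$ (using the nonnegativity of Laplacian eigenvalues to remove any sign ambiguity), and the parity of $c_r/g$ agrees with that of $\theta_r/g$; thus the conditions (ii)(a),(b) of the theorem collapse precisely onto the corresponding conditions of the corollary. The minimum peak time becomes $\tau_0 = \pi/(g\sqrt{D}) = \pi/g$, and the phase at the peak is $\tau\theta_s = 0$. The only genuinely subtle step is the deduction that $D = 1$, which crucially exploits that $0$ is an \emph{integer} member of the mutual eigenvalue support and not merely an algebraic integer; everything else is a routine unfolding of definitions after the choice $\theta_s = 0$.
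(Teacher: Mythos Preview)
Your approach is exactly the paper's: the paragraph immediately preceding the corollary does precisely what you outline, observing that for a connected graph the kernel of $L$ is spanned by the all-ones vector so that $E_0 = \tfrac{1}{n}J$ and hence $0 \in \ev_L^+(u,v)$, taking $\theta_s = 0$, invoking the remark that an integer in the mutual support forces $D = 1$, and then reading off the simplified conditions together with $\tau_0 = \pi/g$ and phase $\tau\theta_s = 0$.

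One small caveat on your eligibility check: the step ``$\ev_L(u,v)$ is closed under algebraic conjugation because each spectral idempotent is a polynomial in $L$'' implicitly uses that a $\rats$-automorphism fixes $L$ entrywise, which holds when the weights are rational but not for arbitrary algebraic-integer weights. The paper itself does not verify eligibility in this generality either, so this is more a wrinkle in the corollary's stated hypotheses than a defect in your reduction.
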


\begin{remark}
If there is peak state transfer between $u$ and $v$ at time $\tau$ with respect to $M$, we know from Lemma \ref{lem:technical lemma} that there is a $\gamma$ such that $e^{it\theta_r} = \pm \gamma$ for every $\theta_r$ in the mutual eigenvalue support. This means that $e^{2i\tau\theta_r} = \gamma^2$ for every such $\theta_r$. Hence
\[
|U(2\tau)_{v,u}| = \left| \sum_{r=0}^d e^{2i\tau\theta_r} (E_{r})_{v,u}\right| = \gamma^2 \left|\sum_{r=0}^d (E_r)_{v,u}\right| = 0,
\]
so that the transfer probability from $u$ to $v$ at time $2\tau$ is 0.

In this section, we have required $u,v$ to be distinct vertices, because this is the interesting case. We could have also allowed $u=v$ (or peak state transfer to itself), but it can be easily seen that this is equivalent to \textsl{periodicity} (which occurs when a diagonal entry of $|U(t)|$ equals $1$) since the diagonal entries of the bounding matrix are always $1$.

If there is peak state transfer from $u$ to $v$ at time $\tau$, then the walk is not necessarily periodic at $u$ at time $2\tau$, as it would be in the case of perfect state transfer. After all, $\ev(u,u)$ may contain eigenvalues that are not in $\ev(u,v)$, possibly invalidating the conditions in Theorem \ref{thm:peakST-char-time}.
\end{remark}

\section{Examples of peak state transfer}\label{sec:exs}

In this section, we give many examples of peak state transfer. First we give sporadic examples in \cref{sec:sporadicexamples}. In \cref{sec:fam-going-to-1}, we give an infinite family of graphs with peak state transfer where the transfer amount goes to $1$ as the number of vertices increases. This shows that we can have peak state transfer coming arbitrarily close to $1$, but between vertices at distance $2$. On the other hand, in \cref{sec:fam-high-peak-wtd-path}, we give an infinite family of weighted paths where the peak state transfer goes between vertices arbitrarily far away.

\subsection{Small sporadic examples}\label{sec:sporadicexamples}

In this section, we have three illustrative examples of peak state transfer in small graphs. Graphs with integer eigenvalues (with respect to the adjacency matrix) have good potential to have peak state transfer. The Petersen graph is such an example. We also give an example of peak state transfer between vertices at distance $4$ (with respect to the adjacency matrix) and an example of peak state transfer with respect to the Laplacian matrix.

\subsubsection*{The Petersen graph}

The Petersen graph \cite{HolShe1993} is a well-known graph in graph theory, which has many symmetries and  forms a  counterexample to many would-be conjectures. Although it does not admit perfect state transfer, it does admit peak state transfer (with respect to the adjacency matrix) at time $\pi$ between every pair of adjacent vertices, where the transfer probability is $64/225 \approx 0.284444$. See Figure \ref{fig:pete-peak}. The lower bound on the fidelity, following Lemma \ref{lem:fidelity} is $(60/23)\sqrt{2/15} \approx 0.952561$. 

\begin{figure}[htbp]
  \centering
  \begin{minipage}{0.4\textwidth}
    \centering
\begin{tikzpicture}[scale=0.4, every node/.style={scale=1.0}]
\tikzstyle{graph node}=[circle, fill=black, draw=black, inner sep=1.5pt]
 
\useasboundingbox (-5.05,-4.4) rectangle (5.1,5.25);
 
\begin{scope}[rotate=90]
 \foreach \x/\y in {0/1,72/2,144/3,216/4,288/5}{
 \node[graph node] (\y) at (canvas polar cs: radius=2.5cm,angle=\x){};
 }
 \foreach \x/\y in {0/6,72/7,144/8,216/9,288/10}{
 \node[graph node] (\y) at (canvas polar cs: radius=5cm,angle=\x){};
 }
\end{scope}
 
\foreach \x/\y in {1/6,2/7,3/8,4/9,5/10}{
 \draw (\x) -- (\y);
}
 
\foreach \x/\y in {1/3,2/4,3/5,4/1,5/2}{
 \draw (\x) -- (\y);
}
 
\foreach \x/\y in {6/7,7/8,8/9,9/10,10/6}{
 \draw (\x) -- (\y);
}
\node[left] at (1) {$u$};
\node[left] at (6.north) {$v$};
\end{tikzpicture}
    \par\vspace{1ex}
    \textbf{(a)} the Petersen graph
  \end{minipage}
  \quad
  \begin{minipage}{0.55\textwidth}
    \centering
    \includegraphics[width=\textwidth]{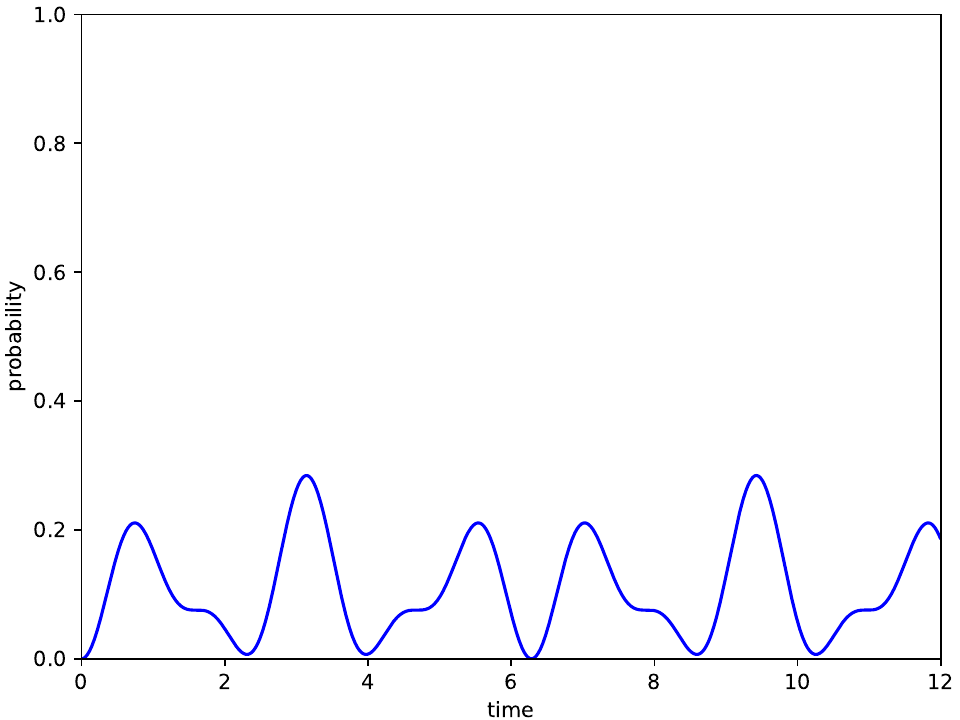}
    \par\vspace{1ex}
    \textbf{(b)} $(U(t)\circ \comp{U(t)})_{u,v}$
  \end{minipage}
  
  \caption{The Petersen graph and the probability of measuring at $v$ when starting at vertex $u$, for any pair of adjacent vertices $u$ and $v$.}
  \label{fig:pete-peak}
\end{figure}

\subsubsection*{Peak state transfer in a small planar graph}

The 11-vertex planar graph $G_{11}$ depicted in Figure \ref{fig:small_planar_peakST2} admits peak state transfer with respect to the adjacency matrix between vertex $u$ and vertex $v$ at time $\pi$ over distance $4$; the corresponding transfer probability at that time is $16/25 = 0.64$. 
The lower bound on the fidelity given by Lemma \ref{lem:fidelity} is $(20/9)\sqrt{1/5} \approx 0.993808$.

\begin{figure}[H]
  \centering
  \begin{minipage}{0.4\textwidth}
    \centering
\begin{tikzpicture}[scale=0.5, every node/.style={scale=1.0}]
\tikzstyle{graph node}=[circle, fill=black, draw=black, inner sep=1.5pt]

\node[graph node] (0) at (-2,2.5) {};
\node[graph node] (1) at (-2,-2.5) {};
\node[graph node] (2) at (-1,0) {};
\node[graph node] (3) at (1,0) {};
\node[graph node] (4) at (3,0) {};
\node[graph node] (5) at (5,2.5) {};
\node[graph node] (6) at (5,1) {};
\node[graph node] (7) at (5,-1) {};
\node[graph node] (8) at (5,-2.5) {};
\node[graph node] (9) at (6.5,0) {};
\node[graph node] (10) at (8,0) {};

\node[left] at (2.west) {$u$};

\foreach \x/\y in {0/1,0/2,0/3,1/2,1/3,2/3,0/5,1/8,3/4,4/6,4/7,5/6,6/7,7/8,6/9,7/9,5/10,6/10,7/10,8/10,9/10}{
    \draw (\x) -- (\y);
}
\node[left] at (9.west) {$v$};

\end{tikzpicture}
    \par\vspace{1ex}
    \textbf{(a)} $G_{11}$
  \end{minipage}
  \quad
  \begin{minipage}{0.55\textwidth}
    \centering
    \includegraphics[width=\textwidth]{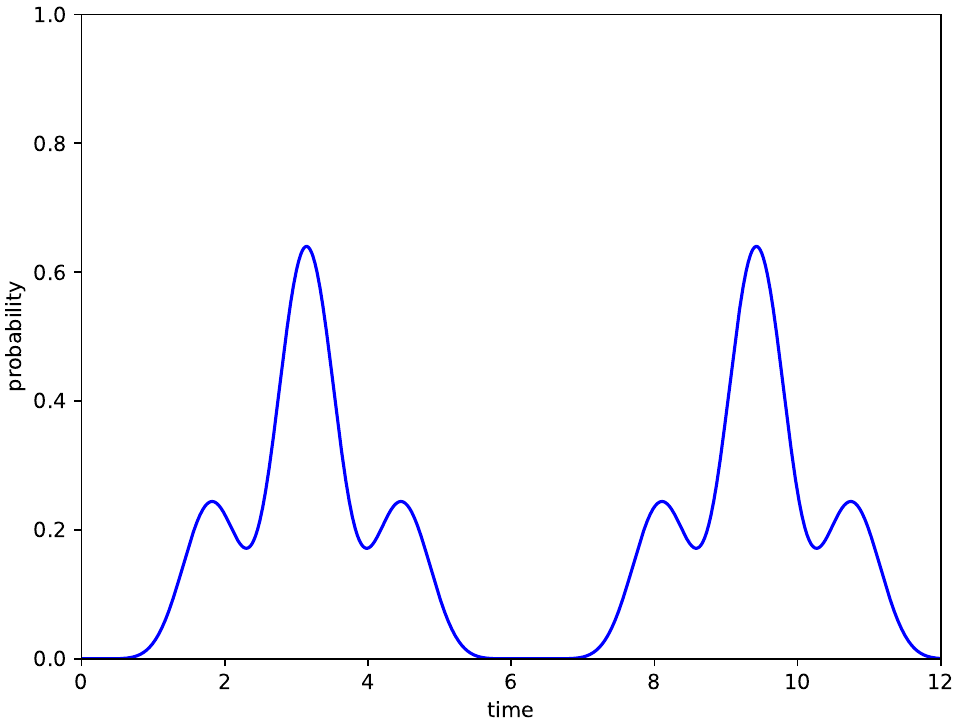}
    \par\vspace{1ex}
    \textbf{(b)} $(U(t)\circ \comp{U(t)})_{u,v}$
  \end{minipage}
  
  \caption{$G_{11}$ and the probability of measuring at $v$ when starting at vertex $u$. 
  \label{fig:small_planar_peakST2}}
\end{figure}

\subsubsection*{Laplacian peak state transfer}

The 12-vertex graph $G_{12}$ depicted in Figure \ref{fig:small_laplacian_peakST} admits peak state transfer from vertex $u$ to vertex $v$ with respect to its Laplacian matrix. The time of peak state transfer is $\pi/2$ and the transfer probability is $4/9 \approx 0.444444$. This graph is planar and is isomorphic to its planar dual. The lower bound on the fidelity given by Lemma \ref{lem:fidelity} is $(6/5)\sqrt{2/3} \approx 0.979796$. 

\begin{figure}[H]
  \centering
  \begin{minipage}{0.4\textwidth}
    \centering
\begin{tikzpicture}[scale=0.6, every node/.style={scale=1.0}]
\tikzstyle{graph node}=[circle, fill=black, draw=black, inner sep=1.5pt]
\node[graph node] (0) at (-2,0) {};
\node[graph node] (1) at (0,3) {};
\node[graph node] (2) at (0,2) {};
\node[graph node] (3) at (0,1) {};
\node[graph node] (4) at (0,0) {};
\node[graph node] (5) at (0,-1) {};
\node[graph node] (6) at (0,-2) {};
\node[graph node] (7) at (2,0) {};
\node[graph node] (8) at (4,0) {};
\node[graph node] (9) at (4,-1) {};
\node[graph node] (10) at (4,-2) {};
\node[graph node] (11) at (6,0) {};

\node[left] at (0.west) {$u$};
\node[above] at (8.north) {$v$};

\foreach \x/\y in {0/1,0/4,0/6,1/2,2/3,3/4,4/5,5/6,1/11,2/8,3/7,4/7,5/7,5/9,6/9,6/10,7/8,7/9,8/9,9/10,8/11,10/11}{
    \draw (\x) -- (\y);
}

\end{tikzpicture}
    \par\vspace{1ex}
    \textbf{(a)} $G_{12}$
  \end{minipage}
  \quad
  \begin{minipage}{0.55\textwidth}
    \centering
    \includegraphics[width=\textwidth]{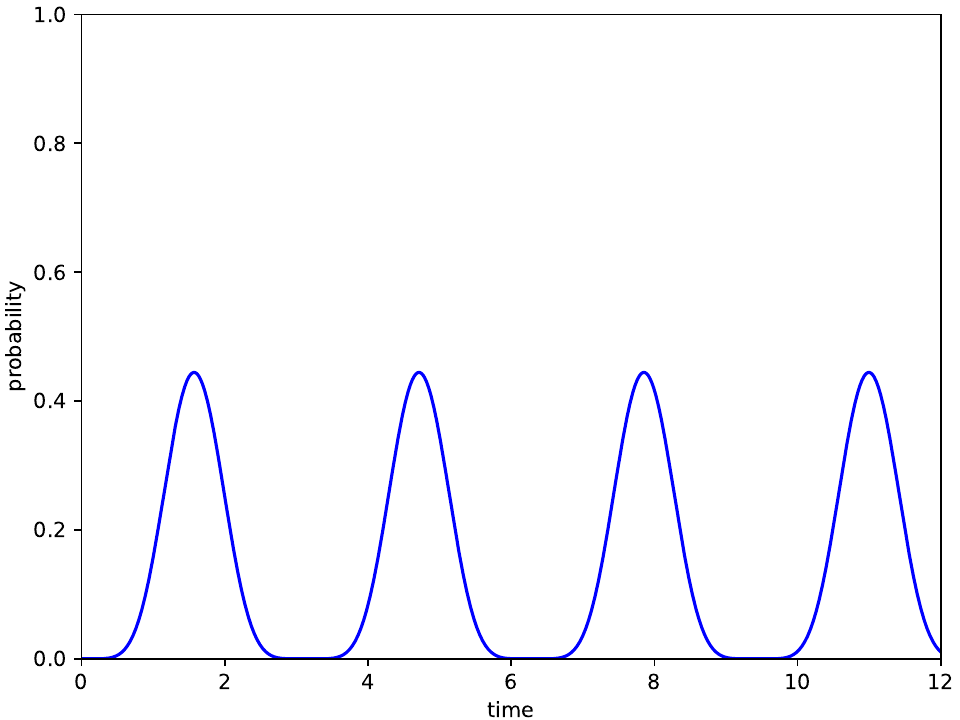}
    \par\vspace{1ex}
    \textbf{(b)} $(U(t)\circ \comp{U(t)})_{u,v}$
  \end{minipage}
  
  \caption{$G_{12}$ and the probability of measuring at $v$ when starting at vertex $u$. 
  \label{fig:small_laplacian_peakST}}
\end{figure}

\subsection{Family of graphs with peak state transfer going to \texorpdfstring{$1$}{1}}\label{sec:fam-going-to-1}

In this section, we given a family of graphs with a pair of vertices having peak state transfer with probability tending to $1$ as the number of vertices increases. We note that the corresponding fidelity as given by Lemma 4.1 also tends to $1$.

Consider the following family of graphs. For some positive integer $n$, we take $n$ copies of $K_2$ and then we add a pair of vertices $u$ and $v$ and add for each copy of $K_2$ a pair of edges going from one of its endpoints to $u$ and $v$. We denote the resulting graph by $G_n$. For $n=1,\ldots, 4$ the corresponding graphs are depicted in Figure \ref{fig:peakST_infinite_family}.

\begin{figure}[htbp]
\centering
\begin{tikzpicture}[scale=1, every node/.style={scale=1.0}]
\tikzstyle{graph node}=[circle, fill=black, draw=black, inner sep=1.5pt]
    \node (gn) at (0,2) {$G_1$};
    
    \node[graph node] (0) at (0.5,-1) {};
    \node[graph node] (1) at (-0.5,-1) {};
    \node[graph node] (2) at (0,0) {};
    \node[graph node] (3) at (0,1) {};

    \node[below] at (0.south) {$v$};
    \node[below] at (1.south) {$u$};
    
    \foreach \x/\y in {1/2,0/2,2/3}{
        \draw (\x) -- (\y);
    }
\end{tikzpicture}\qquad\quad
\begin{tikzpicture}[scale=1, every node/.style={scale=1.0}]
\tikzstyle{graph node}=[circle, fill=black, draw=black, inner sep=1.5pt]
    \node (gn) at (0,2) {$G_2$};
    
    \node[graph node] (0) at (0.5,-1) {};
    \node[graph node] (1) at (-0.5,-1) {};
	\node[graph node] (2) at (-.5, 0) {};
	\node[graph node] (3) at (-.5, 1) {};
	\node[graph node] (4) at (.5, 0) {};
	\node[graph node] (5) at (.5, 1) {};
    
    \node[below] at (0.south) {$v$};
    \node[below] at (1.south) {$u$};
    
    \foreach \x/\y in {1/2,0/2,2/3,0/4,1/4,4/5}{
        \draw (\x) -- (\y);
    }
\end{tikzpicture}\qquad\quad
\begin{tikzpicture}[scale=1, every node/.style={scale=1.0}]
\tikzstyle{graph node}=[circle, fill=black, draw=black, inner sep=1.5pt]
    \node (gn) at (0,2) {$G_3$};
    
    \node[graph node] (0) at (0.5,-1) {};
    \node[graph node] (1) at (-0.5,-1) {};
	\node[graph node] (2) at (-.75, 0) {};
	\node[graph node] (3) at (-.75, 1) {};
	\node[graph node] (4) at (0, 0) {};
	\node[graph node] (5) at (0, 1) {};
    \node[graph node] (6) at (.75, 0) {};
	\node[graph node] (7) at (.75, 1) {};
    
    \node[below] at (0.south) {$v$};
    \node[below] at (1.south) {$u$};
    
    \foreach \x/\y in {1/2,0/2,2/3,0/4,1/4,4/5,0/6,1/6,6/7}{
        \draw (\x) -- (\y);
    }
\end{tikzpicture}\qquad\quad
\begin{tikzpicture}[scale=1, every node/.style={scale=1.0}]
\tikzstyle{graph node}=[circle, fill=black, draw=black, inner sep=1.5pt]
    \node (gn) at (0,2) {$G_4$};
    
    \node[graph node] (0) at (0.5,-1) {};
    \node[graph node] (1) at (-0.5,-1) {};
	\node[graph node] (2) at (-1, 0) {};
	\node[graph node] (3) at (-1, 1) {};
	\node[graph node] (4) at (-.3333, 0) {};
	\node[graph node] (5) at (-.3333, 1) {};
    \node[graph node] (6) at (.3333, 0) {};
	\node[graph node] (7) at (.3333, 1) {};
    \node[graph node] (8) at (1, 0) {};
	\node[graph node] (9) at (1, 1) {};
    
    \node[below] at (0.south) {$v$};
    \node[below] at (1.south) {$u$};
    
    \foreach \x/\y in {1/2,0/2,2/3,0/4,1/4,4/5,0/6,1/6,6/7,0/8,1/8,8/9}{
        \draw (\x) -- (\y);
    }
\end{tikzpicture}
\caption{The first four graphs in the infinite family of graphs for which the transfer probability of peak state transfer between $u$ and $v$ converges to $1$.\label{fig:peakST_infinite_family}}
\end{figure}
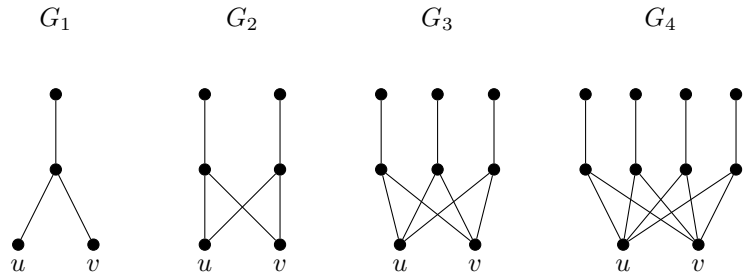

The adjacency matrix $A_n$ of $G_n$ can be written down as the block matrix
\[
\left[
\begin{array}{ccc}
0 & J_{2 \times n} & 0 \\
 J_{n \times 2} & 0 & I_n \\
 0 & I_n & 0
\end{array}
\right],
\]
the blocks of entries corresponding to $\{u,v\}$, the neighbours of $u$ and $v$ and the non-neighbours of $u$ and $v$, respectively (with respective block sizes $2$, $n$ and $n$). The eigenvalues of $A$ are $\pm \sqrt{2n+1}$, $\pm 1$ and $0$. The corresponding spectral idempotents are
\[
E_{\pm \sqrt{2n+1}} = \frac{1}{4n^2+2n}
\left[
\begin{array}{ccc}
n^2 J_{2\times 2} & \pm n\sqrt{2n+1} J_{2 \times n} & n J_{2\times n} \\
\pm n\sqrt{2n+1} J_{n \times 2} & (2n+1)J_{n\times n} & \pm \sqrt{2n+1} J_{n \times n} \\
n J_{n\times 2} & \pm \sqrt{2n+1} J_{n \times n} & J_{n\times n}
\end{array}
\right],
\]
\[
E_{\pm 1} = \frac{1}{2n}
\left[
\begin{array}{ccc}
0 & 0 & 0\\
0 & nI_n - J_{n \times n} & \pm (nI_n - J_{n \times n})\\
0 & \pm (nI_n - J_{n \times n}) & nI_n - J_{n \times n}
\end{array}
\right]
\]
and
\[
E_0 = \frac{1}{2n+1}
\left[
\begin{array}{ccc}
(2n+1)I_{2} - nJ_{2 \times 2} & 0 & -nJ_{2 \times n}\\
0 & 0 & 0 \\
 -J_{n \times 2} & 0 & 2 J_{n \times n}
\end{array}
\right]
\]
by a straightforward calculation.
We find that the mutual eigenvalue support of $u$ and $v$ consists of the eigenvalues $\pm\sqrt{2n+1}$ and $0$, with the positive and negative parts being
\[
\ev^+(u,v) = \{\pm \sqrt{2n+1} \} \quad \text{and} \quad \ev^-(u,v) = \{0\}.
\]
To verify the conditions of Theorem \ref{thm:peakST-char-time}, we can take $D$ to be the square-free part of $2n+1$. Since
\[
g = \gcd\left\{ 0, \pm \frac{\sqrt{2n+1}}{\sqrt{D}}  \right\} = \frac{\sqrt{2n+1}}{\sqrt{D}},
\]
it follows that condition (ii) of Theorem \ref{thm:peakST-char-time} is also verified, so that there is peak state transfer at all times that are odd multiples of
\[
\tau_0 = \frac{\pi}{g\sqrt{D}} = \frac{\pi}{\sqrt{2n+1}}.
\]
If we consider the $(u,v)$-entry of the bounding matrix $B(M)$, we see that
\[
B(M)_{u,v} = 2\cdot \frac{n^2}{4n^2 + 2n} + \frac{n}{2n+1} = \frac{2n}{2n+1},
\]
which converges to $1$ as $n$ increases.

\subsection{High peak state transfer in  weighted path graphs}\label{sec:fam-high-peak-wtd-path}

We define an infinite family of weighted paths and show that they admit peak state transfer with a high amount of transfer, nearly from end to end; the peak state transfer will go from the second vertex to the last vertex.

The weighted graph $X_n$ is defined as follows: the vertex set of $X_n$ is $v_1,\ldots, v_{n}$ and $w_1,\ldots, w_n$, and the edges $v_jw_{n+1-j}$ with weight $\sqrt{j(n+1-j)}$ for $j=1,\ldots, n$ and $w_j v_{n-j}$ with weight $\sqrt{j(n-j)}$ for $j=1,\ldots, n-1$. In other words, $X_n$ is a path on $2n$ vertices which visits the vertices in the following order:
$v_n,w_1, v_{n-1}, w_2, \ldots, v_2, w_{n-1}, v_1, w_n$
and the weight of an edge $v_jw_k$ is $\sqrt{jk}$. See Figure \ref{fig:weirdpaththings} for $X_2,X_3$ and $X_4$. 

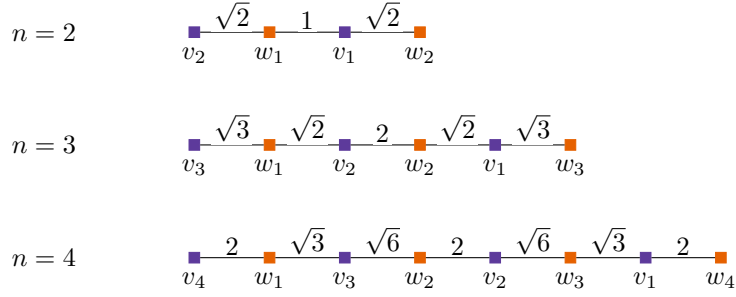
\begin{figure}[htbp]
    \centering
\begin{tikzpicture}[scale=1.0, every node/.style={scale=1.0}]
% Define styles
\tikzstyle{graph node}=[circle, fill=black, draw=black, inner sep=2pt]
\tikzstyle{path node v}=[rectangle, fill=kcolour, draw=kcolour, inner sep=2pt]
\tikzstyle{path node w}=[rectangle, fill=vcolour, draw=vcolour, inner sep=2pt]
\tikzstyle{edge label}=[midway, fill=white, inner sep=1pt]

\node at (-2,0) {$n=2$}; 
% Draw graph vertices
\node[path node v] (v1) at (0,0) {};
\node[path node w] (v2) at (1,0) {};
\node[path node v] (v3) at (2,0) {};
\node[path node w] (v4) at (3,0) {};
\draw (v1) -- node[edge label, above] {$\sqrt{2}$} (v2);
\draw (v2) -- node[edge label, above] {$1$} (v3);
\draw (v3) -- node[edge label, above] {$\sqrt{2}$} (v4);
\node[below] at (v1.south) {$v_2$};
\node[below] at (v2.south) {$w_1$};
\node[below] at (v3.south) {$v_1$};
\node[below] at (v4.south) {$w_2$};

\node at (-2,-1.5) {$n=3$}; 
% Draw graph vertices
\node[path node v] (x1) at (0,-1.5) {};
\node[path node w] (x2) at (1,-1.5) {};
\node[path node v] (x3) at (2,-1.5) {};
\node[path node w] (x4) at (3,-1.5) {};
\node[path node v] (x5) at (4,-1.5) {};
\node[path node w] (x6) at (5,-1.5) {};
\draw (x1) -- node[edge label, above] {$\sqrt{3}$} (x2);
\draw (x2) -- node[edge label, above] {$\sqrt{2}$} (x3);
\draw (x3) -- node[edge label, above] {$2$} (x4);
\draw (x4) -- node[edge label, above] {$\sqrt{2}$} (x5);
\draw (x5) -- node[edge label, above] {$\sqrt{3}$} (x6);
\node[below] at (x1.south) {$v_3$};
\node[below] at (x2.south) {$w_1$};
\node[below] at (x3.south) {$v_2$};
\node[below] at (x4.south) {$w_2$};
\node[below] at (x5.south) {$v_1$};
\node[below] at (x6.south) {$w_3$};

\node at (-2,-3) {$n=4$}; 
% Draw graph vertices
\node[path node v] (y1) at (0,-3) {};
\node[path node w] (y2) at (1,-3) {};
\node[path node v] (y3) at (2,-3) {};
\node[path node w] (y4) at (3,-3) {};
\node[path node v] (y5) at (4,-3) {};
\node[path node w] (y6) at (5,-3) {};
\node[path node v] (y7) at (6,-3) {};
\node[path node w] (y8) at (7,-3) {};
\draw (y1) -- node[edge label, above] {$2$} (y2);
\draw (y2) -- node[edge label, above] {$\sqrt{3}$} (y3);
\draw (y3) -- node[edge label, above] {$\sqrt{6}$} (y4);
\draw (y4) -- node[edge label, above] {$2$} (y5);
\draw (y5) -- node[edge label, above] {$\sqrt{6}$} (y6);
\draw (y6) -- node[edge label, above] {$\sqrt{3}$} (y7);
\draw (y7) -- node[edge label, above] {$2$} (y8);
\node[below] at (y1.south) {$v_4$};
\node[below] at (y2.south) {$w_1$};
\node[below] at (y3.south) {$v_3$};
\node[below] at (y4.south) {$w_2$};
\node[below] at (y5.south) {$v_2$};
\node[below] at (y6.south) {$w_3$};
\node[below] at (y7.south) {$v_1$};
\node[below] at (y8.south) {$w_4$};
\end{tikzpicture}
    \caption{The weighted paths $X_n$ for $n=2,3,4$.    \label{fig:weirdpaththings}}
\end{figure}

The adjacency matrices of these weighted paths are the normalized quotient matrices of the natural equitable partition of the graph $Y_n$, obtained from $X_n$ by replacing each $v_j,w_j$ with a set of vertices $V_j, W_j$ (resp.) of size $j$ and adding all edges between $V_j$ and $W_k$ if $v_j$ is adjacent to $w_k$. 

\begin{figure}[htbp]
    \centering
\begin{tikzpicture}[scale=1.0, every node/.style={scale=1.0}]
% Define styles
\tikzstyle{graph node}=[circle, fill=black, draw=black, inner sep=1.5pt]
\tikzstyle{path node v}=[rectangle, fill=kcolour, draw=kcolour, inner sep=2pt]
\tikzstyle{path node w}=[rectangle, fill=vcolour, draw=vcolour, inner sep=2pt]
\tikzstyle{edge label}=[midway, fill=white, inner sep=1pt]

\node[graph node] (x11) at (0,0) {};
\node[graph node] (x12) at (0,-0.7) {};
\node[graph node] (x13) at (0,-1.4) {};
\node[graph node] (x14) at (0,-2.1) {};

\node[graph node] (x21) at (1,-1.05) {};

\node[graph node] (x31) at (2,-0.35) {};
\node[graph node] (x32) at (2,-1.05) {};
\node[graph node] (x33) at (2,-1.75) {};

\node[graph node] (x41) at (3,-0.7) {};
\node[graph node] (x42) at (3,-1.4) {};

\node[graph node] (x51) at (4,-0.7) {};
\node[graph node] (x52) at (4,-1.4) {};

\node[graph node] (x61) at (5,-0.35) {};
\node[graph node] (x62) at (5,-1.05) {};
\node[graph node] (x63) at (5,-1.75) {};

\node[graph node] (x71) at (6,-1.05) {};

\node[graph node] (x81) at (7,0) {};
\node[graph node] (x82) at (7,-0.7) {};
\node[graph node] (x83) at (7,-1.4) {};
\node[graph node] (x84) at (7,-2.1) {};

\draw (x11) -- (x21);
\draw (x12) -- (x21);
\draw (x13) -- (x21);
\draw (x14) -- (x21);
\draw (x21) -- (x31);
\draw (x21) -- (x32);
\draw (x21) -- (x33);
\draw (x41) -- (x31);
\draw (x41) -- (x32);
\draw (x41) -- (x33);
\draw (x42) -- (x31);
\draw (x42) -- (x32);
\draw (x42) -- (x33);
\draw (x41) -- (x51);
\draw (x42) -- (x51);
\draw (x41) -- (x52);
\draw (x42) -- (x52);
\draw (x51) -- (x61);
\draw (x51) -- (x62);
\draw (x51) -- (x63);
\draw (x52) -- (x61);
\draw (x52) -- (x62);
\draw (x52) -- (x63);
\draw (x71) -- (x61);
\draw (x71) -- (x62);
\draw (x71) -- (x63);
\draw (x71) -- (x81);
\draw (x71) -- (x82);
\draw (x71) -- (x83);
\draw (x71) -- (x84);
% Draw graph vertices
\node[path node v] (y1) at (0,-3) {};
\node[path node w] (y2) at (1,-3) {};
\node[path node v] (y3) at (2,-3) {};
\node[path node w] (y4) at (3,-3) {};
\node[path node v] (y5) at (4,-3) {};
\node[path node w] (y6) at (5,-3) {};
\node[path node v] (y7) at (6,-3) {};
\node[path node w] (y8) at (7,-3) {};
\draw (y1) -- node[edge label, above] {$2$} (y2);
\draw (y2) -- node[edge label, above] {$\sqrt{3}$} (y3);
\draw (y3) -- node[edge label, above] {$\sqrt{6}$} (y4);
\draw (y4) -- node[edge label, above] {$2$} (y5);
\draw (y5) -- node[edge label, above] {$\sqrt{6}$} (y6);
\draw (y6) -- node[edge label, above] {$\sqrt{3}$} (y7);
\draw (y7) -- node[edge label, above] {$2$} (y8);
\node[below] at (y1.south) {$v_4$};
\node[below] at (y2.south) {$w_1$};
\node[below] at (y3.south) {$v_3$};
\node[below] at (y4.south) {$w_2$};
\node[below] at (y5.south) {$v_2$};
\node[below] at (y6.south) {$w_3$};
\node[below] at (y7.south) {$v_1$};
\node[below] at (y8.south) {$w_4$};
\end{tikzpicture}
    \caption{The graph $Y_4$ with its quotient $X_4$.  \label{fig:weirdpaththings2}}
\end{figure}
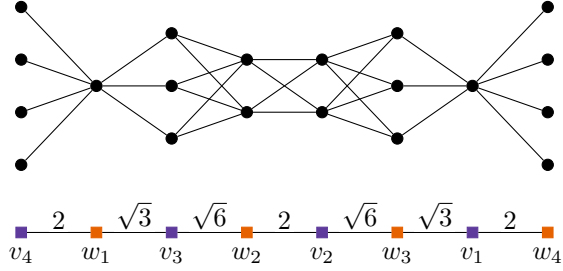

It turns out that $X_n$ is an integral weighted graph (i.e. the eigenvalues of $A_n$ are integers). Despite the simple structure of the spectrum of $A_n$, its eigenvectors do not seem to have simple closed expressions. Our proof of the following lemma works with the characteristic polynomial of $A_n$, avoiding the eigenvectors completely. To better preserve the flow of the paper, we defer the proof to the appendix. 

\begin{lemma}
    \label{lem:weighted_adj_eigenvalues}
    The eigenvalues of the weighted adjacency matrix $A_n$ of $X_n$ are $\pm 1, \pm 2, \ldots, \pm n$, each with multiplicity $1$.
\end{lemma}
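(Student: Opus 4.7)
Since $X_n$ is a weighted path and hence bipartite, the spectrum of $A_n$ is symmetric about $0$. Reordering the $2n$ vertices so that $v_1,\ldots,v_n$ come first and $w_1,\ldots,w_n$ second, the adjacency matrix takes the block anti-diagonal form
\[
A_n \;=\; \begin{pmatrix} 0 & M \\ M^T & 0 \end{pmatrix},
\]
where $M$ is the $n\times n$ lower-bidiagonal matrix with $M_{k,k}=\sqrt{k(n+1-k)}$ and $M_{k+1,k}=\sqrt{k(n-k)}$. A Schur-complement computation then yields $\det(xI - A_n) = \det(x^2 I_n - M^T M)$, so the claim reduces to showing that $M^T M$ has eigenvalues $1,4,9,\ldots,n^2$. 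Two immediate sanity checks support this: $\det(M)=\prod_k\sqrt{k(n+1-k)}=n!$ gives $\det(M^T M)=(n!)^2=\prod_{k=1}^n k^2$, and the diagonal entry $(M^T M)_{k,k}=k(2n+1-2k)$ yields $\tr(M^T M)=n(n+1)(2n+1)/6=\sum_{k=1}^n k^2$.

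For the full eigenvalue identification, I would work directly with the three-term recurrence for the characteristic polynomials of the principal submatrices of $A_n$. Let $p_j(x)$ denote the characteristic polynomial of the submatrix on the first $j$ vertices of $X_n$; then $p_0=1$, $p_1=x$, and $p_j = x\,p_{j-1} - w_{j-1}^2\,p_{j-2}$, with $w_{2k-1}^2=k(n+1-k)$ and $w_{2k}^2=k(n-k)$. Since $X_n$ is bipartite, I would substitute $y=x^2$ and decompose $p_{2k}(x)=q_k(y)$ and $p_{2k-1}(x)=x\,r_k(y)$; this turns the scalar recurrence into the coupled system
\[
q_k \;=\; y\,r_k - k(n+1-k)\,q_{k-1}, \qquad r_{k+1} \;=\; q_k - k(n-k)\,r_k,
\]
with initial data $q_0=r_1=1$. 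The goal is then to prove by joint induction on $k$ that $q_n(y)=\prod_{k=1}^n (y-k^2)$, tracking simultaneously a closed-form expression for the auxiliary $r_k$.

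The main obstacle is that, for $k<n$, neither $q_k$ nor $r_k$ factors cleanly over $\mathbb{Z}$ (their roots are not generally integers), so the joint induction must carry explicit coefficient-level information rather than merely a factorization. A potentially cleaner alternative exploits the fact that $X_n$ is a tree, so its characteristic polynomial equals its matching polynomial. The claim then becomes the symmetric-function identity
\[
m_k(X_n) \;=\; e_k(1^2,2^2,\ldots,n^2) \qquad \text{for every } k,
\]
where $m_k(X_n)$ is the number of $k$-matchings of $X_n$ weighted by the product of squared edge weights and $e_k$ is the $k$-th elementary symmetric polynomial. This reformulation could be proved by a transfer-matrix calculation, by a generating-function argument for $\sum_k m_k(X_n)\,t^k$, or by a weight-preserving bijection between $k$-matchings of $X_n$ and $k$-subsets of $\{1,2,\ldots,n\}$ --- the latter being an appealing combinatorial way to sidestep the polynomial gymnastics of the direct recurrence.
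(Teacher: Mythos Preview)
Your setup is sound and in fact coincides with the paper's: the three-term recursion for the characteristic polynomials of principal submatrices, split by parity into what you call $q_k,r_k$ (the paper calls them $p_k^{(n)},q_k^{(n)}$), is exactly the starting point of the paper's proof. The sanity checks on trace and determinant are nice but of course not a proof.

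The gap is precisely the one you yourself name: for fixed $n$ and varying $k<n$, the polynomials $q_k,r_k$ do not factor, so a direct induction on $k$ stalls because there is no clean invariant to carry. The paper sidesteps this by doing the induction on $n$ rather than on $k$. Concretely, it proves the cross-relations
\[
p_k^{(n+1)} = p_k^{(n)} - k^2\,p_{k-1}^{(n)}, \qquad q_k^{(n+1)} = q_k^{(n)} - k(k+1)\,q_{k-1}^{(n)}
\]
by a straightforward induction on $k$ (these relate the submatrix polynomials of $A_{n+1}$ to those of $A_n$). Specializing to $k=n$ and $k=n-1$ and combining with the defining recursion then yields, after a short computation,
\[
p_{n+1}^{(n+1)} = (x^2 - (n+1)^2)\,p_n^{(n)},
\]
from which the lemma follows by induction on $n$. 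So the missing idea is to let the \emph{outer} parameter $n$ vary: the submatrix polynomials for $A_{n+1}$ differ from those for $A_n$ in a controlled way, and that is what makes the argument close.

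Your matching-polynomial reformulation $m_k(X_n)=e_k(1^2,\ldots,n^2)$ is correct and attractive, but as stated it is a restatement rather than a proof; the transfer-matrix route would reproduce the same recurrence you already have, and the bijective route would itself be new work. If you pursue the bijection, note that the total weighted matching count is $(n!)^2$, so you are looking for a weight-preserving map from matchings of $X_n$ to pairs of permutations of $\{1,\ldots,n\}$ --- plausible, but not obvious, and not what the paper does.
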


\begin{proof}
    See Appendix \ref{app:proofs}.
\end{proof}

In the following Theorem, we show that peak state transfer occurs in $X_n$, almost from end to end, and we give a tight, constant lower bound on the transfer probability. In order to apply Theorem \ref{thm:peakST-char-time}, we need access to the $(w_1,w_n)$-entries of the spectral idempotents of $A_n$. For this, we can make use of standard tool that is used in for example in \cite{GodGuoKem2020}, which can be found in \cite{GodsilAlgebraicCombinatorics}.

Given the spectral decomposition
\[
M = \sum_{r=0}^d \theta_r E_r,
\]
of a symmetric matrix $M$, we can write
\[
(tI - M)^{-1} = \sum_{r=0}^d (t - \theta_r)^{-1}E_r
\]
so that for any $j \in \{1,\ldots,r\}$:
\[
(t-\theta_j)(tI - M)^{-1} = E_j + \sum_{r \neq j} \frac{t - \theta_r}{t - \theta_r} E_r.
\]
This means that the $(u,v)$-entry of $E_j$ can be expressed as follows:
\[
(E_j)_{uv} = \lim_{t \to \theta_j} (t - \theta_r)(tI - M)^{-1}_{uv}.
\]
The $(u,v)$-entry of the inverse of $(tI - M)$ is given by $\phi_{uv}(t)/\phi_M(t)$, where $\phi_{uv}(t)$ denotes the cofactor of $tI - M$ corresponding to the $uv$-entry, and where $\phi_{M}(t)$ denotes the characteristic polynomial of $M$. If $M = A$ is the adjacency matrix of a weighted graph $G$, the polynomial $\phi_{uv}$ can be written as
\[
\sum_{P:u \to v} \omega(P)\phi(G \setminus P,t),
\]
where the sum is over all paths $P$ in $G$ from $u$ to $v$, where $\omega(P)$ is the product of the edge weights in the path $P$ and where $\phi(G \setminus P,t)$ is the characteristic polynomial of the graph obtained by deleting the vertices in $P$ from $G$. (See Corollary 2.2 of Chapter 4 in \cite{GodsilAlgebraicCombinatorics}.) Putting this all together, we conclude that
\begin{equation}
\label{eq:idempotent_entry}
(E_j)_{uv} = \lim_{t \to \theta_j} \frac{(t-\theta_r)\sum_{P:u \to v}\omega(P)\phi(G\setminus P,t)}{\phi(G)}.
\end{equation}

\begin{theorem}
\label{thm:weighted_path_peakST}
The weighted graph $X_n$ admits peak state transfer, with respect to its adjacency matrix $A_n$, from $w_1$ to $w_n$, at time $\pi$. 
The sequence $(B(A_n)_{w_1,w_{n}})_{n=1}^{\infty}$ is decreasing and converges to $\sqrt{\pi}/2$. Consequently, if the walk starts at $w_1$, the probability to measure at $w_n$ at time $\pi$ converges to (and is at least) $\pi/4 \approx 0.785398$.
\end{theorem}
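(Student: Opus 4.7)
The plan is to apply Theorem~\ref{thm:peakST-char-time} to $M = A_n$, $u = w_1$, $v = w_n$. By Lemma~\ref{lem:weighted_adj_eigenvalues} the eigenvalues are $\pm 1, \ldots, \pm n$, so condition~(i) of the theorem is immediate with $D = 1$. The remaining work is to pin down the mutual eigenvalue support of $w_1$ and $w_n$ together with its signs, and then to evaluate the bounding-matrix entry in closed form.

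To compute $(E_{\pm k})_{w_1, w_n}$, I will use formula~\eqref{eq:idempotent_entry}. Because $X_n$ is itself a path, the unique walk $P$ from $w_1$ to $w_n$ omits exactly the vertex $v_n$, so $X_n \setminus P$ is a single isolated vertex with $\phi(X_n \setminus P, t) = t$. A direct product over the edges along $P$ gives $\omega(P) = \sqrt{n}\,((n-1)!)^2$. Since the spectrum is $\{\pm 1, \ldots, \pm n\}$, we have $\phi_{X_n}(t) = \prod_{j=1}^{n}(t^2 - j^2)$, and the factorials appearing in $\phi_{X_n}'(k) = 2k\prod_{i \ne k}(k^2 - i^2)$ collapse to $\phi_{X_n}'(k) = (-1)^{n-k}(n-k)!(n+k)!/k$. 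Substituting into \eqref{eq:idempotent_entry} yields
\[
(E_{\pm k})_{w_1, w_n} \;=\; (-1)^{n-k}\, \frac{\sqrt{n}\, ((n-1)!)^2\, k^2}{(n-k)!(n+k)!},
\]
the two values being equal because $\phi_{X_n}'$ is odd. In particular, the mutual eigenvalue support is the whole spectrum, with $\ev^+ = \{\pm k : n - k \text{ even}\}$ and $\ev^- = \{\pm k : n - k \text{ odd}\}$. Choosing $\theta_s = n \in \ev^+$, the integers $c_r = n - \theta_r$ include $c_{n-1} = 1$, so $g = 1$; the parity requirement of Theorem~\ref{thm:peakST-char-time}(ii) then holds because $n + k$ and $n - k$ share the parity of $n - k$. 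This gives peak state transfer at $\tau_0 = \pi$.

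For the bounding-matrix value, summing over the spectrum produces
\[
B(A_n)_{w_1, w_n} \;=\; 2\sum_{k=1}^{n} |(E_k)_{w_1, w_n}| \;=\; \frac{2((n-1)!)^2 \sqrt{n}}{(2n)!} \sum_{k=1}^{n} k^2 \binom{2n}{n-k}.
\]
The binomial sum is the main calculation. Using the symmetry $\binom{2n}{n-k} = \binom{2n}{n+k}$, doubling and re-indexing by $j = n - k$ rewrites it as the second central moment $\sum_{j=0}^{2n}(n - j)^2 \binom{2n}{j}$, which evaluates to $n \cdot 2^{2n-1}$ via the standard identities for $\sum_j \binom{2n}{j}$, $\sum_j j \binom{2n}{j}$ and $\sum_j j^2 \binom{2n}{j}$. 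After simplification this yields the closed form
\[
B(A_n)_{w_1, w_n} \;=\; \frac{2^{2n-1}}{\sqrt{n}\, \binom{2n}{n}}.
\]

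Finally, monotonicity and the limit follow directly from the closed form. A short calculation gives $B(A_{n+1})_{w_1, w_{n+1}} / B(A_n)_{w_1, w_n} = 2\sqrt{n(n+1)}/(2n+1)$, which is strictly less than $1$ since $4n(n+1) < (2n+1)^2$; hence the sequence is strictly decreasing. Stirling's approximation $\binom{2n}{n} \sim 4^n / \sqrt{\pi n}$ then gives $\lim_n B(A_n)_{w_1, w_n} = \sqrt{\pi}/2$, so $B(A_n)_{w_1, w_n}^2 \to \pi/4$ from above, which is the asserted transfer probability. The key conceptual observation is that $X_n$ being itself a path reduces \eqref{eq:idempotent_entry} to a single-path formula; from there the bookkeeping is routine once the binomial moment identity is recognised.
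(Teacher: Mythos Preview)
Your proof is correct and follows essentially the same route as the paper: compute the $(w_1,w_n)$-entries of the spectral idempotents via the single-path instance of \eqref{eq:idempotent_entry}, read off the sign pattern, apply Theorem~\ref{thm:peakST-char-time}, and then evaluate the bounding-matrix entry, its ratio, and its Stirling limit. Your presentation is a little cleaner in two places---you observe that $\theta_s = n$ always lies in $\ev^+(w_1,w_n)$ (the paper unnecessarily splits by parity), and you evaluate $\sum_k k^2\binom{2n}{n-k}$ via the binomial second central moment rather than the paper's Pascal-rule induction---but the underlying argument is the same.
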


\begin{proof}
In this proof, we will index the spectral idempotents of $A_n$ by its eigenvalues, i.e., we will write 
\[
A_n = \sum_{k \in \{\pm 1,\ldots,\pm n\}} kE_k
\]
for the spectral decomposition of $A_n$. For the $(w_1,w_n)$-entry of $E_k$, note that there is a single path $P$ of length $2n-2$ from $w_1$ to $w_n$. The product of the weights of $P$ equals
\[
\omega(P) = \left(\prod_{j = 1}^{n-1} \sqrt{j(n-j)}\right)\left(\prod_{j = 1}^{n} \sqrt{j(n+1-j)}\right) = ((n-1)!)^2\sqrt{n}
\]
and that $X_n \setminus P$ is an isolated vertex, so that equation \eqref{eq:idempotent_entry} yields for $k > 0$:
\[
\begin{split}
(E_{\pm k})_{w_1,w_n} &= \lim_{t \to \pm k} t(t -(\pm k))((n-1)!)^2\sqrt{n} \prod_{j=1}^n\frac{1}{(t - j)(t+j)} \\
&= \frac{1}{2}((n-1)!)^2\sqrt{n} \prod_{\substack{j=1 \\ j\neq k}}^n\frac{1}{(k-j)(k+j)}
\end{split}
\]
Note that the sign of the $(w_1,w_n)$-entry of $E_{\pm k}$ does not depend on the sign of $\pm k$. We can further rewrite this expression as follows:
\[
\begin{split}
\prod\limits_{\substack{j = 1 \\ j \neq k}}^n  (k-j)(k+j) &= \prod_{j < k} (k-j) \cdot \prod_{j < k} (k+j)\cdot \prod_{j > k} (k-j) \cdot \prod_{j > k} (k+j) \\
&= (k-1)! \cdot \frac{(2k-1)!}{k!} \cdot (-1)^{n-k}(n-k)!\cdot \frac{(n+k)!}{2k} \\
&= \frac{(k-1)!}{k!}\cdot\frac{(2k-1)!}{(2k)!} \cdot (-1)^{n-k}(n-k)!(n+k)! \\
&= \frac{(-1)^{n-k}(n-k)!(n+k)!}{2k^2}.
\end{split}
\]
Then, using the identity $\binom{2n}{n+k} = \frac{(2n)!}{(n+k)!(n-k!)}$, we can write
\[
(E_{\pm k})_{w_1,w_n} = \frac{(-1)^{n-k}k^2(n-1)!^2\sqrt{n}}{(2n)!}\binom{2n}{n+k}.
\]
This shows that $(E_{\pm k})_{w_1,w_n}$ is positive whenever $k$ has the same parity as $n$ and negative when the parities differ. Moreover, the mutual eigenvalue support $\ev(w_1,w_n)$ contains all eigenvalues of $X_n$. The value of $D$ in Theorem \ref{thm:peakST-char-time} is $1$, and if we take
\[
\theta_s = \begin{cases}n &\text{if $n$ is odd;}\\
n-1 &\text{if $n$ is even,}
\end{cases}
\]
then $\theta_s - (\pm k)$ is even precisely whenever $k$ has the same parity as $n$ and property (ii) of Theorem \ref{thm:peakST-char-time} is also satisfied. This shows that there is peak state transfer between $w_1$ and $w_n$ at all times that are odd multiples of $\pi$. 

To determine the transfer probability, we can write
\[
\begin{split}
B(A_n)_{w_1,w_n} &= 2\sum_{k=1}^n |(E_k)_{w_1,w_n}| \\
&= \frac{2(n-1)!^2\sqrt{n}}{(2n)!}\sum_{k=1}^n k^2 \binom{2n}{n+k} \\
&= \frac{2(n-1)!^2\sqrt{n}}{(2n)!} \cdot  n\cdot 4^{n-1}.
\end{split}
\]

Here, the third equality can be verified by induction using the identity
\[
\binom{2n+2}{n+k+1} = \binom{2n}{n+k-1} + 2\textbf{}\binom{2n}{n+k} + \binom{2n}{n+k+1}
\]
and reindexing two of the three resulting sums. The sequence $(B(A_n)_{w_1,w_n})_{n=1}^\infty$ is positive and decreasing, for we can write
\[
\begin{split}
B(A_{n+1})_{w_1,w_n} &= \frac{4\sqrt{n+1}}{\sqrt{n}} \cdot \frac{n(n+1)}{(2n+1)(2n+2)}\cdot B(A_n)_{w_1,w_n} \\
&= \frac{2\sqrt{n(n+1)}}{2n+1} B(A_n)_{w_1,w_n},
\end{split}
\]
the square of the factor in front of $B(A_n)_{w_1,w_n}$ being
\[
\frac{4n^2 + 4n}{4n^2 + 4n + 1} < 1.
\]
Hence the sequence converges. We can further rewrite
\[
\begin{split}
B(A_n)_{w_1,w_n} = \frac{2 \cdot 4^{n-1}}{\sqrt{n}(n+1)C_n},
\end{split}
\]
where
\[
C_n \coloneqq \frac{1}{n+1}\binom{2n}{n} = \frac{(2n)!}{n!(n+1)!}
\]
denotes $n$-th Catalan number. A well-known Stirling-based asymptotic for the Catalan number, found for example in \cite[Chapter II.6]{FlajoletSedgewick2009}, is the following:
\[
\frac{n^{3/2}\sqrt{\pi}}{4^n}\cdot C_n \rightarrow 1 
\]
as $n \to \infty$.
This implies that
\[
\begin{split}
\lim_{n\to \infty} B(A_n)_{w_1,w_n} &= \lim_{n\to \infty} \frac{2 \cdot 4^{n-1}}{\sqrt{n}(n+1)C_n} \cdot \frac{n^{3/2}\sqrt{\pi}}{4^n}\cdot C_n \\
&= \lim_{n\to \infty} \frac{n\sqrt{\pi}}{2(n+1)} \\
&= \frac{\sqrt{\pi}}{2}.
\end{split}
\]
In particular, the transfer probability converges to $\pi/2$.
\end{proof}

As a consequence, the lower bound on the fidelity given by Lemma \ref{lem:fidelity} converges to 
\[
\frac{\sqrt{2}\pi^{\frac{1}{4}}}{1 + \frac{\sqrt{\pi}}{2}}
\approx 0.998179.
\]

Turning our attention to the family of unweighted graphs $Y_n$, we can use this to conclude the following about peak state transfer in this family. Let us denote its $(n^2+n)\times(n^2+n)$ adjacency matrix by $A_n'$. Here, `measuring at $W_n$' simply means that the measurement outcome lies in the set $W_n$.

\begin{corollary}
\label{thm:unweighted_pathgraphs_peakST}
The graph $Y_n$ admits peak state transfer, with respect to its adjacency matrix $A_n'$, from the single vertex $x \in W_1$ to every vertex $y \in W_n$, at time $\pi$. For all such $y$, the sequence $(B(A'_n)_{x,y})_{n=1}^{\infty}$ is decreasing and converges to $\sqrt{\pi}/(2\sqrt{n})$. Consequently, if the walk starts at $x$, the probability to measure at $W_n$ at time $\pi$ converges to (and is at least) $\pi/4 \approx 0.785398$.
\end{corollary}
\begin{proof}
We can write $A_n' = NA_nN^\top$, where $N$ is the block-diagonal $(n^2 + n)\times (2n)$-matrix given by
\[
N = \begin{bmatrix}
\frac{1}{\sqrt{n}}\ones_n  & & & & & \\
& \frac{1}{\sqrt{1}}\ones_1 & & & & \\
& & \frac{1}{\sqrt{n-1}}\ones_{n-1} & & & \\
& & & \frac{1}{\sqrt{2}}\ones_2 & & \\
& & & & \ddots & \\
& & & & & \frac{1}{\sqrt{n}}\ones_n
\end{bmatrix}
\]
and where $\ones_k$ is the all-ones vector of length $k$. If the spectral decomposition of $A_n$ is given by
\[
A_n = \sum_{k \in \{\pm 1,\ldots,\pm n\}} kE_k
\]
as in the proof of Theorem \ref{thm:weighted_path_peakST}, then it is easy to verify that the spectral decomposition of $A_n'$ is given by
\[
A_n' = 0 \cdot (I - NN^\top) + \sum_{k \in \{\pm 1,\ldots,\pm n\}} k NE_kN^\top.
\]
That is, the eigenvalues of $A_n$ are also eigenvalues of $A_n'$ and the latter matrix has $0$ as an additional eigenvalue. Let $y \in W_n$; since $I - NN^\top$ is block-diagonal, $0$ is not in the mutual eigenvalue support of $x$ and $y$. Moreover, for every eigenvalue $k$, the $(x,y)$-entry of $NE_kN^\top$ simply equals $(E_k)_{w_1,w_n}/\sqrt{n}$, so the sign does not change and the positive/negative mutual eigenvalue supports of all eigenvalues are the same for $A_n$ and $A_n'$. This shows that there is peak state transfer from $x$ to $y$ for all $y \in W_n$ at time $\pi$ by Theorem \ref{thm:weighted_path_peakST}. The theorem also implies that the sequence
\[
B(A_n')_{x,y} = \frac{1}{\sqrt{n}}B(A_n)_{w_1,w_n}, \quad n=1,2,\ldots
\]
is decreasing and converges to $\sqrt{\pi}/(2\sqrt{n})$, and thus a measurement lands in $W_n$ with probability at least $|W_n| = n$ times the square of this value, i.e.\ $\pi/4$.
\end{proof}

The unweighted graphs $Y_n$ can be used to transfer qubit states over the same distance and with the same fidelity as the weighted paths $X_n$, but it takes an additional unitary operation on the qubits at the vertices in $W_n$. The procedure is as follows. At time $0$ we set the qubit at the single vertex $x \in W_1$ to the qubit state $\ket{\psi}$ that we want to transfer to a qubit in $W_n$, and then at the time of peak state transfer, we measure all qubits that correspond to the vertices that are not in $W_n$. The probability of all of these qubits collapsing to the ground state $\ket{0}$ is equal to
\[
p \coloneqq |\alpha|^2 + |\beta|^2 \sum_{y \in W_n}^n B_{x,y}(A_n')^2 = |\alpha|^2 + |\beta|^2 B_{w_1,w_n}(A_n)^2,
\]
which is the same as the probability for transferring the state from $w_1$ to $w_n$ in $X_n$. The state on the remaining vertices in $W_n$ then becomes
\[
\frac{1}{\sqrt{p}}\left(\alpha\ket{0^n} + \beta\sum_{y \in W_n} B_{x,y}(A_n') \ket{\bar{y}}\right) = \frac{1}{\sqrt{p}}\left(\alpha\ket{0^n} + \beta B_{w_1,w_n}(A_n) \cdot \frac{1}{\sqrt{n}}\sum_{y \in W_n}^n \ket{\bar{y}}\right) .
\]
Applying a suitable unitary operation on the qubits corresponding to $W_n$, which sends $\ket{0^n} \mapsto \ket{0^n}$ and, for a specific vertex $y_0 \in W_n$,
\[
\frac{1}{\sqrt{n}}\sum_{j=1}^n \ket{\bar{y}} \mapsto \ket{\bar{y}_0},
\]
then results in the state
\[
\frac{1}{\sqrt{p}} (\alpha\ket{0^n} + \beta B_{w_1,w_n}(A_n) \ket{\bar{y}_0}).
\]
The qubit state $\ket{\phi} = \alpha\ket{0} + \beta B_{w_1,w_n}(A_n) \ket{1}$ at $y_0$ is then equal to the state that we would obtain at $w_n$, when transferring $\ket{\psi}$ from $w_1$ to $w_n$ in the weighted path $X_n$.

\subsection{Peak state transfer in small graphs}

We searched small connected graphs for the occurrence of peak state transfer with respect to both the adjacency and Laplacian matrix, using SageMath \cite{sagemath}. The results are summarized in Table \ref{tab:peakST_table}, including a comparison with the occurrence of perfect state transfer. Table \ref{tab:peakST_table_trees} contains similar information, but for the class of trees specifically.

\begin{table}[htbp]
    \centering
    \begin{tabular}{c|c|c|c|c|c}
        $n$ & Connected Graphs & PST & Peak state transfer & PST & Peak state transfer \\
        & on $n$ vertices & w.r.t. $A$ & w.r.t. $A$ & w.r.t. $L$ & w.r.t. $L$ \\
        \hline
        2 & 1 & 1 & 1 & 1 & 1 \\
        3 & 2 & 1 & 2 & 0 & 2 \\
        4 & 6 & 1 & 4 & 2 & 5 \\
        5 & 21 & 1 & 9 & 0 & 13 \\
        6 & 112 & 1 & 20 & 0 & 50 \\
        7 & 853 & 1 & 42 & 0 & 191 \\
        8 & 11117 & 5 & 98 & 198 & 1332 \\
        9 & 261080 & 3 & 221 & 0 & 15055
    \end{tabular}
    \caption{Connected graphs on $n$ vertices; those which admit PST between some pair of vertices and those which admit peak state transfer between some pair of vertices. }
    \label{tab:peakST_table}
\end{table}

\begin{table}[htbp]
    \centering
    \begin{tabular}{c|c|c|c|c|c}
        $n$ & Trees & PST & Peak state transfer & PST & Peak state transfer \\
        & on $n$ vertices & w.r.t.\ $A$ & w.r.t.\ $A$ & w.r.t.\ $L$ & w.r.t.\ $L$ \\
        \hline
        2 & 1 & 1 & 1 & 1 & 1 \\
        3 & 1 & 1 & 1 & 0 & 1 \\
        4 & 2 & 0 & 1 & 0 & 1 \\
        5 & 3 & 0 & 2 & 0 & 1 \\
        6 & 6 & 0 & 2 & 0 & 1 \\
        7 & 11 & 0 & 3 & 0 & 1 \\
        8 & 23 & 0 & 2 & 0 & 1 \\
        9 & 47 & 0 & 6 & 0 & 1 \\
        10 & 106 & 0 & 2 & 0 & 2 \\
        11 & 235 & 0 & 7 & 0 & 1 \\
        12 & 551 & 0 & 3 & 0 & 1 \\
        13 & 1301 & 0 & 14 & 0 & 1 \\
        14 & 3159 & 0 & 8 & 0 & 1 \\
        15 & 7741 & 0 & 18 & 0 & 1 \\
        16 & 19320 & 0 & 16 & 0 & 1 \\
        17 & 48629 & 0 & 45 & 0 & 1 \\
        18 & 123867 & 0 & 18 & 0 & 1 \\
    \end{tabular}
    \caption{Trees on $n$ vertices; those which admit PST between some pair of vertices and those which admit peak state transfer between some pair of vertices. }
    \label{tab:peakST_table_trees}
\end{table}

\section{Conclusions and open problems}\label{sec:conclusion}

One direction for future work is the classification of peak state transfer in trees, with respect to both the adjacency matrix and the Laplacian matrix. While perfect state transfer  in trees is extremely rare — for either $A$ or $L$ it only occurs in the the paths graphs on $2$ and $3$ vertices, see \cite{CoutinhoLiu2,CouJulSpi2024} — the phenomenon of peak state transfer appears to be more widespread.

For adjacency-based dynamics, our computations of the previous section suggests that there may be infinitely many trees that support peak state transfer. In contrast, for Laplacian dynamics, the known examples are much more constrained: star graphs (complete bipartite $K_{1,n}$) always exhibit Laplacian peak state transfer, and among trees on 10 vertices, only one other example was found to admit peak state transfer. See Figure \ref{fig:trees-peak-ex}.

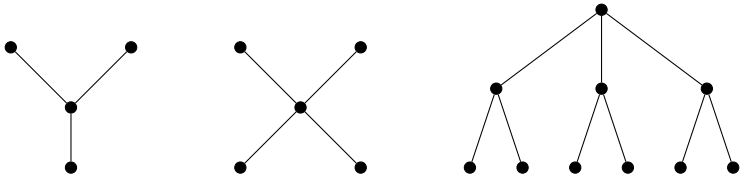
\begin{figure}[htbp]
\centering
\begin{tikzpicture}[scale=0.8, every node/.style={scale=1}]
\tikzstyle{graph node}=[circle, fill=black, draw=black, inner sep=1.5pt]

% -------- Star graph on 4 vertices --------
\node[graph node] (c1) at (0,0) {};
\node[graph node] (l1) at (-1,1) {};
\node[graph node] (l2) at (1,1) {};
\node[graph node] (l3) at (0,-1) {};
\foreach \x in {l1,l2,l3} {
    \draw (c1) -- (\x);
}

\end{tikzpicture}
\hspace{30pt}
\begin{tikzpicture}[scale=0.8, every node/.style={scale=1}]
\tikzstyle{graph node}=[circle, fill=black, draw=black, inner sep=1.5pt]

% -------- Star graph on 5 vertices --------
\node[graph node] (c2) at (0,0) {};
\node[graph node] (l1) at (-1,1) {};
\node[graph node] (l2) at (1,1) {};
\node[graph node] (l3) at (-1,-1) {};
\node[graph node] (l4) at (1,-1) {};
\foreach \x in {l1,l2,l3,l4} {
    \draw (c2) -- (\x);
}

\end{tikzpicture}
\hspace{30pt}
\begin{tikzpicture}[scale=0.7, every node/.style={scale=1}]
\tikzstyle{graph node}=[circle, fill=black, draw=black, inner sep=1.5pt]

% -------- Nearly binary tree on 10 vertices --------

% Level 0 (root)
\node[graph node] (r) at (0,0) {};

% Level 1 (3 children)
\node[graph node] (a) at (-2,-1.5) {};
\node[graph node] (b) at (0,-1.5) {};
\node[graph node] (c) at (2,-1.5) {};

% Level 2 (2 children per node)
\node[graph node] (a1) at (-2.5,-3) {};
\node[graph node] (a2) at (-1.5,-3) {};

\node[graph node] (b1) at (-0.5,-3) {};
\node[graph node] (b2) at (0.5,-3) {};

\node[graph node] (c1) at (1.5,-3) {};
\node[graph node] (c2) at (2.5,-3) {};

% Edges
\foreach \child in {a,b,c} {
    \draw (r) -- (\child);
}

\foreach \parent/\x/\y in {a/a1/a2, b/b1/b2, c/c1/c2} {
    \draw (\parent) -- (\x);
    \draw (\parent) -- (\y);
}

\end{tikzpicture}
    \caption{Three examples of trees admitting Laplacian peak state transfer found by computation; $K_{1,3}$, $K_{1,4}$, and a nearly binary tree with 10 vertices.
    \label{fig:trees-peak-ex}}
\end{figure}

A natural open problem is the following:

\begin{openprob} Determine whether infinitely many such trees, not isomorphic to the star graph, admit Laplacian peak state transfer.  \end{openprob}

Another thing: $X$ on $n$ vertices with $m$ edges and peak state transfer between $u,v$ at distance $d$. What is a lower bound on $m$? This problem is posed for perfect state transfer in \cite{CoutinhoGuo2024}. 

Another natural question concerns the relationship between graph sparsity and the ability to support peak state transfer. Suppose we have a graph  $G$ on $n$ vertices with $m$ edges and peak state transfer between $u,v$ at distance $d$. How small can $m$ be? More formally, we have the following:

\begin{openprob} If $G$ admits peak state transfer between $u,v$ at distance $d$ and has $m$ edges, find a lower bound for $m$ in terms of $d$.  \end{openprob}

In other words: what is the cheapest possible graph, in terms of the number of edges, with peak state transfer over distance $d$? A version of this problem was posed for the case of perfect state transfer in \cite{CoutinhoGuo2024}. 

Lastly, we turn our attention to strongly cospectral vertices; strong cospectrality was defined to study perfect state transfer, see for example \cite{God2010}. If $u,v$ are strongly cospectral, then the entry $B(A)_{u,v}$ of the bounding matrix for the adjacency matrix is equal to $1$. In this case, peak state transfer is equivalent to perfect state transfer. There are many graphs with strongly cospectral pairs of vertices that do not admit perfect state transfer. In this case, one can ask if there are tighter upper bounds on the maximum amount of state transfer.

\subsection*{Acknowledgements}

Gabriel Coutinho acknowledges the support from FAPEMIG and CNPq.

\appendix
\section{Nomenclature}\label{app:nomen}

To aid readability, we include below a table of commonly used symbols and their meanings. This allows the reader to easily recall definitions without having to search through the main text.

\begin{table}[h]
\centering
\begin{tabular}{c|l}
   $M$  & any symmetric real matrix  \\
  $M_{u,v}$ & $(u,v)$-entry of $M$ \\
   $H$ & the $2^n \times 2^n$ Hamiltonian  \\
   $G$ & graph \\ 
  $\theta_r$   &  distinct eigenvalues \\
  $\lambda,\mu$ & eigenvalues, not necessarily distinct \\ 
  $E_{r}$ & eigenprojector of $\theta_r$ \\
  $i$ & $\sqrt{-1}$\\
  $j,k,\ell, r,s,t$ & indices \\
  $n$ & number of vertices in graph \\
  $m$ & number of edges in graph \\
  $t$ & time variable \\
  $\tau$ & specific time, for example of peak state transfer \\
  $u,v,w,x,y,z$ & vertices \\
\end{tabular}
\end{table}

\section{Proof of Lemma \ref{lem:weighted_adj_eigenvalues}}
\label{app:proofs}

We will give the proof of Lemma \ref{lem:weighted_adj_eigenvalues} below. The lemma states that the eigenvalues of the $2n \times 2n$ matrix
\[
A_n =
\begin{bmatrix}
0 & \sqrt{b_1} & & \\
\sqrt{b_1} & 0 & \sqrt{b_2} & \\
& \ddots & \ddots & \ddots\\
& & \sqrt{b_{2n-2}} & 0 & \sqrt{b_{2n-1}} \\
& & & \sqrt{b_{2n-1}} & 0
\end{bmatrix},
\]
where
\[
b_{2k-1} = k(n-k+1) \quad\text{and}\quad b_{2k} = k(n-k),
\]
are precisely the integers in the set $\{\pm 1, \ldots, \pm n\}$. We will make use of Corollary 2.2 from Chapter 8 of \cite{GodsilAlgebraicCombinatorics}, which implies that the characteristic polynomial $p_{k+1}(x)$ of the $k \times k$ leading principal submatrix of $A_n$ follows the three-step recursion
\[
p_{k+1}(x) = xp_{k}(x) - b_kp_{k-1}(x).
\]
(This can be easily verified inductively by expanding the determinant of the matrix $xI - A_n$ along the last row.) If we set $p_{0}(x) = 1$ for the $0\times 0$ submatrix, the recursion is valid for all $k \geq 1$. In the proof, we will omit the variable $x$ from the polynomial notation and write $p_{k} \coloneqq p_{k}(x)$.

\begin{proof}[Proof of Lemma \ref{lem:weighted_adj_eigenvalues}]
In our setting, we want to show that the characteristic polynomial of $A_n$, which has degree $2n$, is given by
\[
\prod_{k=1}^n (x^2 - k^2).
\]
By the statement above, we can express this characteristic polynomial recursively using the leading principal submatrices of $A_n$. Note that it is not the case that the top-left $(2k \times 2k)$-submatrix of $A_n$ is the matrix $A_k$, so we cannot use the recursion directly to relate $A_n$ to $A_{n-1}$.

Since the expression for $b_j$ depends on the parity of $j$, we will define two sets of polynomials: we set $p_{0}^{(n)}= 1$ and $q_{0}^{(n)} = x$ and then recursively
\[
\begin{split}
p_{k}^{(n)} &= xq_{k-1}^{(n)} - k(n-k+1)p_{k-1}^{(n)} \quad \text{for } 1 \leq k \leq n \quad \text{and}\\
q_{k}^{(n)} &= xp_{k}^{(n)} - k(n-k)q_{k-1}^{(n)}\quad \text{for } 1 \leq k \leq n-1.
\end{split}
\]
Indeed, that makes it so that $p_k^{(n)}$ is the characteristic polynomial of the leading principal $(2k \times 2k)$-submatrix of $A_n$, and similarly for $q_{k}^{(n)}$ and the leading principal $((2k+1)\times (2k+1))$-submatrix of $A_n$. Our goal is to show that for all $n \geq 0$:
\[
p_{n+1}^{(n+1)} = (x^2 - (n+1)^2) p_{n}^{(n)}.
\]
Then the statement of the lemma follows by induction.

We claim that is suffices to prove that the polynomials $p_n^{(n+1)}$ and $q_{n-1}^{(n+1)}$ corresponding to $A_{n+1}$ can be written in terms of the polynomials $p_k^{(n)}$ and $q_k^{(n)}$ corresponding to the smaller matrix $A_n$ as follows:
\begin{align}
\label{eq:claim_recurs1} p_n^{(n+1)} &= (n+1)p_{n}^{(n)} - nxq_{n-1}^{(n)}.
\quad \text{and} \\
\label{eq:claim_recurs2} q_{n-1}^{(n+1)} &= (n+1)q_{n-1}^{(n)} - nxp_{n-1}^{(n)} 
\end{align}
Assume that the claim holds. We can rewrite $p_{n+1}^{(n+1)}$, using the recursive definition, as

\begin{align}
\nonumber p_{n+1}^{(n+1)} &= xq_{n}^{(n+1)} -(n+1)p_{n}^{(n+1)} \\
\nonumber &= x(xp_{n}^{(n)} -nq_{n-1}^{(n+1)}) - (n+1)p_n^{(n+1)} \\
\label{eq:rewrite_p(n+1)(n+1)} &= (x^2 - (n+1))p_n^{(n+1)} - nxq_{n-1}^{(n+1)},
\end{align}
and then we can plug \eqref{eq:claim_recurs1} and \eqref{eq:claim_recurs2} into \eqref{eq:rewrite_p(n+1)(n+1)} to obtain

\[
\begin{split}
p_{n+1}^{(n+1)} &= (x^2 - (n+1))\left((n+1)p_n^{(n)} - nxq_{n-1}^{(n)}\right) - nx\left((n+1)q_{n-1}^{(n)} - nxp_{n-1}^{(n)}\right) \\
&= ((n+1)x^2 - (n+1)^2)p_n^{(n)} - nx^3q_{n-1}^{(n)} + n^2x^2p_{n-1}^{(n)} \\
&= ((n+1)x^2 - (n+1)^2)p_n^{(n)} - nx^2p_n^{(n)} \\
&= (x^2 - (n+1)^2)p_n^{(n)},
\end{split}
\]
as required.

To prove the claim, we will show by induction on $k$ that
\begin{align}
\label{eq:claim_recurs3} p_k^{(n+1)} &= p_{k}^{(n)} - k^2p_{k-1}^{(n)} \quad \text{and} \\
\label{eq:claim_recurs4} q_k^{(n+1)} &= q_{k}^{(n)} - k(k+1)q_{k-1}^{(n)}.
\end{align}
This holds in particular for $k = n$ in \eqref{eq:claim_recurs3} and for $k=n-1$ in \eqref{eq:claim_recurs4}. In the resulting expressions, we can then substitute
\[
np_{n-1}^{(n)} = x q_{n-1}^{(n)} - p_{n}^{(n)} \quad \text{and} \quad (n-1)q_{n-2}^{(n)} = x p_{n-1}^{(n)} - q_{n-1}^{(n)}
\]
to obtain \eqref{eq:claim_recurs1} and \eqref{eq:claim_recurs2}. We conclude the proof with this induction argument. For $k=1$, we have
\[
p_1^{(n+1)} = xq_0^{(n+1)} - (n+1)p_{0}^{(n+1)} = x^2 - n - 1 = p_1^{(n)} - 1^2 p_{0}^{(n)}
\]
and
\[
q_1^{(n+1)} = x p_{1}^{(n+1)} - nq_{0}^{(n+1)} = x\left(p_1^{(n)} - p_0^{(n)}\right) - nx = \left(xp_1^{(n)} - 1\cdot (n-1)x\right) -2x = q_1^{(n)} - 2q_0^{(n)}.
\]
Finally, we inductively find for $k \geq 2$:
\[
\begin{split}
p_k^{(n+1)} &= xq_{k-1}^{(n+1)} - k(n-k+2)p_{k-1}^{(n+1)}\\
&= x\left(q_{k-1}^{(n)} - (k-1)kq_{k-2}^{(n)}\right) \,\,-\,\, k(n-k+2)\left(p_{k-1}^{(n)} - (k-2)^2p_{k-2}^{(n)}\right) \\
&= \left(xq_{k-1}^{(n)} -k(n-k+2)p_{k-1}^{(n)}\right) \,\,-\,\, k(k-1)\left(xq_{k-2}^{(n)} - (k-1)(n-k+2) p_{k-2}^{(n)}\right) \\
&= \left(p_{k}^{(n)} - kp_{k-1}^{(n)}\right) \,\,-\,\, k(k-1)p_{k-1}^{(n)} \\
&= p_{k}^{(n)} - k^2p_{k-1}^{(n)}
\end{split}
\]
and
\[
\begin{split}
q_{k}^{(n+1)} &= xp_{k}^{(n+1)} - k(n-k+1)q_{k-1}^{(n+1)} \\
&= x\left(p_k^{(n)} - k^2p_{k-1}^{(n)}\right) \,\,-\,\, k(n-k+1)\left(q_{k-1}^{(n)} - (k-1)kq_{k-2}^{(n)}\right) \\
&= \left(x p_{k}^{(n)} - k(n-k+1)q_{k-1}^{(n)}\right) \,\,-\,\, k^2\left(xp_{k-1}^{(n)} -(k-1)(n-k+1)q_{k-2}^{(n)}\right) \\
&= \left(q_{k}^{(n)} - kq_{k-1}^{(n)}\right) \,\,-\,\, k^2 q_{k-1}^{(n)} \\
&= q_{k}^{(n)} - k(k+1)q_{k-1}^{(n)}. \qedhere
\end{split}
\]
\end{proof}

\end{document}